\newtheorem{definition}{Definition} 
\newtheorem{prop}[definition]{Proposition}
\newtheorem{lemma}[definition]{Lemma}
\newtheorem{thm}[definition]{Theorem}
\newtheorem*{rep@theorem}{\rep@title}
\newcommand{\newreptheorem}[2]{%
\newenvironment{rep#1}[1]{%
 \def\rep@title{#2 \ref{##1} (restatement)}%
 \begin{rep@theorem}}%
 {\end{rep@theorem}}}
\def\ba#1\ea{\begin{align}#1\end{align}}
\def\ban#1\ean{\begin{align*}#1\end{align*}}
\newcommand{\be}{\begin{equation}}
\newcommand{\ee}{\end{equation}}
\def\benum{\begin{enumerate}}
\def\eenum{\end{enumerate}}
\def\squareforqed{\hbox{\rlap{$\sqcap$}$\sqcup$}}
\def\qed{\ifmmode\squareforqed\else{\unskip\nobreak\hfil
\penalty50\hskip1em\null\nobreak\hfil\squareforqed
\parfillskip=0pt\finalhyphendemerits=0\endgraf}\fi}
\def\endenv{\ifmmode\;\else{\unskip\nobreak\hfil
\penalty50\hskip1em\null\nobreak\hfil\;
\parfillskip=0pt\finalhyphendemerits=0\endgraf}\fi}
\newcommand{\<}{\langle}
\renewcommand{\>}{\rangle}
\def\be{\begin{equation}}
\def\ee{\end{equation}}
\def\ben{\begin{eqnarray}}
\def\een{\end{eqnarray}}
\def\bei{\begin{itemize}}
\def\eei{\end{itemize}}
\def\ep{\epsilon}
\mathchardef\ordinarycolon\mathcode`\:
\def\vcentcolon{\mathrel{\mathop\ordinarycolon}}
\newcommand{\nc}{\newcommand}
 \nc{\proj}[1]{|#1\rangle\!\langle #1 |} 
\nc{\avg}[1]{\langle#1\rangle}
\nc{\conv}{\operatorname{conv}}
\nc{\smfrac}[2]{\mbox{$\frac{#1}{#2}$}} \nc{\Tr}{\operatorname{Tr}}
\nc{\ox}{\otimes} \nc{\dg}{\dagger} \nc{\dn}{\downarrow}
\nc{\lmax}{\lambda_{\text{max}}}
\nc{\lmin}{\lambda_{\text{min}}}
\nc{\csupp}{{\operatorname{csupp}}}
\nc{\qsupp}{{\operatorname{qsupp}}} \nc{\var}{\operatorname{var}}
\nc{\rar}{\rightarrow} \nc{\lrar}{\longrightarrow}
\nc{\poly}{\operatorname{poly}}
\nc{\polylog}{\operatorname{polylog}} \nc{\Lip}{\operatorname{Lip}}
\nc{\Om}{\Omega}
\nc{\wt}[1]{\widetilde{#1}}
\def\>{\rangle}
\def\<{\langle}
\nc{\glneq}{{\raisebox{0.6ex}{$>$}  \hspace*{-1.8ex} \raisebox{-0.6ex}{$<$}}}
\nc{\gleq}{{\raisebox{0.6ex}{$\geq$}\hspace*{-1.8ex} \raisebox{-0.6ex}{$\leq$}}}
\nc{\vholder}[1]{\rule{0pt}{#1}}
\nc{\wh}[1]{\widehat{#1}}
\nc{\h}[1]{\widehat{#1}}
\nc{\ob}[1]{#1}
\def\beq{\begin {equation}}
\def\eeq{\end {equation}}
\def\be{\begin{equation}}
\def\ee{\end{equation}}
\nc{\eq}[1]{(\ref{eq:#1})} 
\nc{\eqs}[2]{\eq{#1} and \eq{#2}}
\nc{\eqn}[1]{Eq.~(\ref{eqn:#1})}
\nc{\eqns}[2]{Eqs.~(\ref{eqn:#1}) and (\ref{eqn:#2})}
\nc{\region}{\cS\cW}
\newenvironment{protocol*}[1]
  {
    \begin{center}
      \hrulefill\\
      \textbf{#1}
  }
  {
    \vspace{-1\baselineskip}
    \hrulefill
    \end{center}
  }
\begin{document}


\title{Robust Device Independent Randomness Amplification}

\author{Ravishankar Ramanathan}
\affiliation{National Quantum Information Center of Gda\'{n}sk, 81-824 Sopot, Poland}
\affiliation{Institute of Theoretical Physics and Astrophysics, University of Gda\'{n}sk, 80-952 Gda\'{n}sk, Poland}

\author{Fernando G. S. L. Brand\~{a}o}
\affiliation{Department of Computer Science, University College London}

\author{Andrzej Grudka}
\affiliation{Faculty of Physics, Adam Mickiewicz University, 61-614 Pozna\'{n}, Poland}

\author{Karol Horodecki}
\affiliation{National Quantum Information Center of Gda\'{n}sk, 81-824 Sopot, Poland}
\affiliation{Institute of Informatics, University of Gda\'{n}sk, 80-952 Gda\'{n}sk, Poland}

\author{Micha{\l} Horodecki}
\affiliation{National Quantum Information Center of Gda\'{n}sk, 81-824 Sopot, Poland}
\affiliation{Institute of Theoretical Physics and Astrophysics, University of Gda\'{n}sk, 80-952 Gda\'{n}sk, Poland}

\author{Pawe{\l} Horodecki}
\affiliation{National Quantum Information Center of Gda\'{n}sk, 81-824 Sopot, Poland}
\affiliation{Faculty of Applied Physics and Mathematics, Technical University of Gda\'{n}sk, 80-233 Gda\'{n}sk, Poland}




\date{\today}

\begin{abstract}
In randomness amplification a slightly random source is used to produce an improved random source. Perhaps surprisingly, a single source of randomness cannot be amplified at all classically. However, the situation is different if one considers correlations allowed by quantum mechanics as an extra resource. Here we present a protocol that amplifies Santha-Vazirani sources arbitrarily close to deterministic into fully random sources. The protocol is device independent, depending only on the observed statistics of the devices and on the validity of the no-signaling principle between different devices. It improves previously-known protocols in two respects. First the protocol is tolerant to noise so that even noisy quantum-mechanical systems give rise to good devices for the protocol. Second it is simpler, being based on the violation of a four-party Bell inequality and on the XOR as a hash function. As a technical tool we prove a new de Finetti theorem where the subsystems are selected from a Santha-Vazirani source. 
\end{abstract}

\maketitle

\section{Introduction}
Inferring the presence of completely random processes in nature is of both fundamental and practical importance, with applications ranging from cryptography and numerical simulations to gambling. Even though in applications one usually needs a source of nearly perfect random bits (unbiased and uncorrelated with anything else), in practice only imperfect randomness is available. Is there a way of amplifying the quality of a source of randomness? Perhaps surprisingly the answer is negative in the classical world \cite{SV}: one can never amplify the randomness of a unique random source \footnote{If one has access to two or more \textit{independent} sources, randomness amplification is possible classically.}.

As randomness sources we consider Santha-Vazirani sources \cite{SV}, defined by the property that for any bit string $X = (X_1, X_2, \dots, X_n)$ produced by the source and for any $1 \leq i < n$, 
\be
\frac{1}{2} - \varepsilon \leq p(X_{i+1} = 0 | X_i, \dots, X_1) \leq \frac{1}{2} + \varepsilon. 
\label{SVcond}
\ee 
Thus each bit produced by a $\varepsilon$-SV source can be seen as the flip of a biased coin, with the bias determined by the history of the process, but always upper bounded by $\varepsilon$. The goal of randomness amplification is to use an $\varepsilon$-SV source to produce a bit that is as close as possible to a fully random bit. In \cite{SV} it was proved that for every $\varepsilon' < \varepsilon$ there is no protocol transforming an $\varepsilon$-SV source into an $\varepsilon'$-SV source.

The impossibility result of \cite{SV} only holds for classical protocols, leaving open the possibility that with non-classical resources randomness amplification might be possible. Indeed the violation of Bell inequalities by quantum correlations implies that the measurement outcomes could not have been predetermined, so one may be tempted to conclude that Bell experiments already achieve randomness amplification. However, this conclusion is marred by the fact that the Bell tests also require measurement settings to be chosen randomly; without this measurement independence, it is possible to construct deterministic models to explain the Bell violation \cite{Barrett, Hall}. 

In a seminal work, Colbeck and Renner \cite{Renner} used the violation of Bell inequalities by quantum correlations to infer that, in contrast to the classical case, randomness amplification is possible for $\varepsilon$-SV sources for a certain range of $\varepsilon$. The main idea was to use the imperfect random bits from the SV source to choose the measurement settings of a set of spatially separated observers in a Bell test. The only assumption made was the validity of the no-signaling principle; in fact this was shown to be also necessary for perfect randomness to occur in any theory \cite{Renner}.  

Ref. \cite{Renner} left open the question of whether any SV source, as long as not fully deterministic, could be amplified into a perfect source. This was answered in the affirmative by Gallego \textit{et al.} in \cite{Acin}, where it was shown that even from an arbitrarily small non-zero amount of randomness in the SV source, one may obtain perfectly random bits by using quantum correlations that violate a five-party Bell inequality. However the protocol does not tolerate noise and requires a large number of devices for its implementation. See also \cite{Grudka, Pawlowski, Alastair, Scarani, Plesch} for more recent work in the area. 

It may be helpful to distinguish randomness amplification from the task of (device-independent) randomness expansion, where one assumes that an input seed of perfect random bits is available and the goal is to expand a given random bit string into a larger sequence of random bits. Quantum non-locality has found application also in this latter task \cite{Colbeck10, Pironio, Colbeck, Acin2, Fehr11, Pironio13, VV12a} as well as in device-independent cryptographic scenarios (see e.g. \cite{BHK, Masanes09, Hanggi, VV12b, Coudron}).

\subsection{Result}

In this paper we present a protocol for randomness amplification secure against no-signalling adversaries that can tolerate a constant rate of noise depending only on the quality of the initial random source. 

\begin{thm}[informal] For every $\varepsilon > 0$, there is a protocol using an $\varepsilon$-SV source and $O(\log(1/\varepsilon'))$ non-signalling devices that with high probability either produces a bit that is $\varepsilon'$-close to uniform or aborts. Moreover single-qubit local measurements on many-copies of a four-qubits entangled quantum state, with $\poly(1 - 2\varepsilon)$ error rate (per qubit of the state or measurement) give rise to devices that do not abort the protocol with high probability.
\end{thm}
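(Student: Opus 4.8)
The plan is to design the protocol around a four-party Bell inequality $I$ with two special properties: (i) any box attaining close to its maximal no-signalling value must have the XOR $a\oplus b\oplus c\oplus d$ of the four outputs nearly uniform and nearly uncorrelated with any system of a no-signalling adversary, and (ii) this ``randomness certification'' is stable if the four measurement settings are supplied by an $\varepsilon$-SV source rather than a perfect one, i.e.\ the measurement-dependence loophole of \cite{Barrett, Hall} is closed by the choice of $I$, as in \cite{Acin} but now with only four parties. Given such an $I$, the protocol is: use the SV source to choose the measurement settings for a collection of $N=\poly(1/(1-2\varepsilon))+O(\log(1/\varepsilon'))$ four-party devices; use a further block of SV bits to designate a random subset $S$ of $k=O(\log(1/\varepsilon'))$ of them as ``generation'' devices and treat the remaining $\poly(1/(1-2\varepsilon))$ as ``estimation'' devices; form the empirical value $\hat I$ from the estimation devices; if $\hat I<\beta$ abort, otherwise output $\bigoplus_{i\in S}(a_i\oplus b_i\oplus c_i\oplus d_i)$.

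I would first prove the single-device statement: for a fixed four-party no-signalling box whose value of $I$ is at least $\beta$ \emph{under SV-distributed inputs}, the bit $a\oplus b\oplus c\oplus d$ has guessing probability at most $\tfrac12+\delta(\beta,\varepsilon)$ conditioned on the adversary's system and on the realized SV bits, with $\delta\to 0$ as $\beta\to I_{\max}$ for any fixed $\varepsilon<\tfrac12$. This is a linear-programming argument over the four-party no-signalling polytope: one writes the adversary's optimal guessing probability as an optimization over no-signalling extensions of the box and over SV-admissible input weights in $[\tfrac12-\varepsilon,\tfrac12+\varepsilon]$, and the real work is choosing $I$ so that the LP certifying randomness is feasible simultaneously for \emph{all} such input weights. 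In parallel I would show that $\hat I$ concentrates around the true value of $I$ even though the inputs are only SV-random: the per-device contributions form a martingale difference sequence with bounded increments, so an Azuma-type bound gives deviation exponentially small in the number of estimation devices.

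Next comes the reduction from the genuine, arbitrarily no-signalling-correlated $N$ devices to i.i.d.\ ones via the announced de Finetti theorem for subsystems selected by an SV source: conditioned on the estimation devices yielding $\hat I\ge\beta$, the joint box of the $k$ generation devices is $O(\eta)$-close in variation distance to a convex mixture of $k$-fold i.i.d.\ boxes, each i.i.d.\ component having true $I$-value at least $\beta-\eta$, where $\eta$ can be driven below any prescribed constant by enlarging $N$ (the loss relative to the uniformly-random-subset case is the price of SV selection and stays bounded because $\varepsilon$ is a fixed constant). Feeding each i.i.d.\ component into the single-device lemma, every generation output is within $\delta$ of a fair coin given all the rest; since the $k$ components are i.i.d.\ and the adversary is no-signalling, the bias of their XOR is at most $(2\delta)^{k}+O(k\eta)$, which is $\le\varepsilon'$ once $k=O(\log(1/\varepsilon'))$ and $\eta$ is a small enough constant. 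This proves the first assertion, with $O(\log(1/\varepsilon'))$ devices.

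For the completeness/noise claim I would exhibit an explicit four-qubit entangled state with single-qubit measurements achieving a value $V_Q$ of $I$ close to $I_{\max}$, then bound how $V_Q$ degrades under independent depolarizing or bit-flip noise of rate $p$ per qubit and per measurement: the degradation is at most a fixed polynomial in $p$. The threshold $\beta$ must lie above the best value of $I$ attainable by local (in particular measurement-dependent/deterministic) strategies, and the gap between $I_{\max}$ and that value scales as a fixed power of $(1-2\varepsilon)$; hence there is room to keep $V_Q-\poly(p)>\beta$ as long as $p=\poly(1-2\varepsilon)$, and then the Azuma bound above shows $\hat I\ge\beta$ except with exponentially small probability, so honest devices do not abort. \textbf{Main obstacle.} I expect the hard step to be the SV-de Finetti theorem: the usual post-selection / exponential de Finetti theorems assume the tested subsystems are drawn uniformly and exchangeably, whereas here the selection comes from an adversarially biased SV source, so one must reprove the symmetrization/typicality estimate with only approximate permutation invariance and, more delicately, ensure the resulting error $\eta$ does not wash out the constant-size randomness gap $\delta$ of the single-device lemma. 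This is exactly why $1-2\varepsilon$ must be treated as a fixed constant, and why the tolerable noise ends up being a polynomial in $1-2\varepsilon$ rather than an absolute constant.
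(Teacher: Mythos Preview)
Your high-level plan---certify single-device randomness by a linear program over the no-signalling polytope, estimate the Bell value with an Azuma/supermartingale bound under SV inputs, reduce to (approximately) product devices by a de Finetti argument, then XOR the extracted bits---is the same skeleton the paper uses. Two of your design choices differ from the paper's, and one of them creates a genuine gap.

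\textbf{The extracted bit.} You take the full parity $a\oplus b\oplus c\oplus d$; the paper takes $\mathrm{maj}(x^1,x^2,x^3)$. For the four-party inequality the paper actually uses (and typically for inequalities where quantum reaches the no-signalling maximum), maximal violation \emph{forces} $\bigoplus_i x^i$ to be a deterministic function of the setting---that is precisely the content of the Bell constraints $B(\textbf{x},\textbf{u})$---so the full XOR carries no randomness at all. The paper's point is that the majority of three of the four outputs \emph{is} unpredictable even at algebraic violation (established by LP duality, Lemma~\ref{lin-prog}). Your plan requires exhibiting a different four-party inequality that simultaneously has algebraic quantum violation and a random full parity; you have not done this, and it is not obviously possible.

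\textbf{The de Finetti architecture and the device count.} This is where your argument actually breaks. You use $N$ single-use four-party devices and SV-select a size-$k$ generation subset; the de Finetti error $\eta$ then enters your final bound additively, so you need $\eta=O(\varepsilon')$, not merely ``a small enough constant'' as you wrote (your own bound is $(2\delta)^k+O(k\eta)$, and $k\eta\le\varepsilon'$ already forces $\eta\le \varepsilon'/k$). Any subset-selection de Finetti then forces $N$ to grow at least polynomially in $1/\varepsilon'$, contradicting the claimed $O(\log(1/\varepsilon'))$ device count. The paper avoids this with a different structure: it uses only $k=O(\log(1/\varepsilon'))$ four-party devices, but each device is \emph{reused sequentially} $n_j$ times, and the SV source selects one \emph{time slot} $a_j\in[n_j]$ per device. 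The paper's de Finetti (Lemma~\ref{deFinetti-bound2}) is a chain-rule/Pinsker argument showing that, conditioned on the earlier rounds, the $k$ selected rounds are close to product; crucially, the error is driven down by enlarging the $n_j$ (number of uses, which may be huge) rather than the number of physical devices. That decoupling is exactly what makes a logarithmic device count compatible with a de Finetti error below $\varepsilon'$. Relatedly, no separate estimation block is needed: the paper tests and extracts from the \emph{same} $k$ selected rounds (Lemma~\ref{Azuma-estimation}), which is why there is no additive $\poly(1/(1-2\varepsilon))$ term in the device count.
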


See Proposition \ref{correctnessprot1} for more a precise formulation of the result. 

As we discuss in the next subsection, the protocol is also simpler than the one presented in \cite{Acin}. First it is based on the violation of a four-partite Bell inequality, instead of a five-partite one. Second it uses the XOR as a hash function, in place of a random function.

\subsection{Overview of the Protocol and of its Correctness Proof} \label{overview}

Here we give an overview of the protocol for randomness amplification and give a high-level explanation of why it works. The protocol is depicted in Fig \ref{no-signaling-fig} and outlined in Fig \ref{Protocol}. It involves $4k$ non-signalling boxes. Each box can be reused several times and it is assumed that there are no signalling from the future to the past inside each box (see section \ref{assumptions}). 

We split the boxes into $k$ groups of 4 boxes each. We call each group a \textit{device}. In each device the goal is to violate a particular four-partite Bell inequality, with 2 inputs and outputs for each of the 4 parties. The inequality is stated explicitly in section \ref{bellinequality}. Its violating "maximally" when the value is zero (the minimum possible value). Here we are mostly concerned with the following interesting property that makes the inequality useful for randomness amplification: Given any non-signalling box violating the inequality maximally, the bit $(p, 1-p)$ corresponding to the majority of the first three output bits, given \textit{any} input, is such that $1/4 \leq p \leq 3/4$. If the value of the Bell inequality is only $\delta$, then we have 
\begin{equation}
1/4 - O(\delta) \leq p \leq 3/4 + O(\delta).
\end{equation}
This fact is established by a linear programming argument (similarly to \cite{Acin}) in Lemma \ref{lin-prog}. Therefore if we had the promise that the box at hand violates the Bell inequality close to maximally, one could extract (imperfect) certified randomness by choosing a deterministic input. 

Suppose further that we had $k$ devices with Bell value approximately zero and that we were promised the devices were product with each other, with the outputs of each device not depending on either the inputs or the outputs of the others. Then by choosing deterministic inputs to all the devices one would obtain $k$ independent bits with distributions $(p_i, 1 - p_i)$ satisfying $1/4 - O(\delta) \leq p_i \leq 3/4 + O(\delta)$. As we show in Lemma \ref{XOR-lemma}, computing the XOR of the bits one would obtain a bit $(q, 1 - q)$ such that
\begin{equation}
1/2 -  (1/4 + O(\delta))^k \leq q \leq 1/2 + (1/4 + O(\delta))^k,
\end{equation}
which is arbitrarily close to uniform for $k$ sufficiently large.  However typically the devices will not be independent of each other and there is no a priori guarantee that the boxes violate the Bell inequality. 

Let us first address the first challenge that the devices might not be product . An important ingredient of the protocol is a procedure to reduce the general case to the case of uncorrelated boxes. To this goal we prove a new version of the de Finetti theorem (see Lemma \ref{deFinetti-bound2}). Suppose we choose $(a_1, \ldots, a_k) \in [n_1] \times \ldots \times [n_k]$ \footnote{$[n]$ stands for the set $\{1, \ldots, n \}$.} from an $\varepsilon$-SV source (with $n_1, \ldots, n_k$ fixed integers) and reuse the $j$-th device $a_j$ times. Then the de Finetti bound says that the distribution of the $k$ devices in their last use (given by $a_j$ for the $j$-th device), conditioned on the inputs and outputs of all previous uses, is close to independent (as long as $n_1, \ldots n_k$ are chosen large enough, with their sizes increasing with $1/2-\varepsilon$). The proof of this bound is based on the information-theoretical approach of \cite{Brandao, Brandao2} and might be of independent interest. Thus after reusing many times the devices we are in a situation of having $k$ (close to) uncorrelated devices. 

The second challenge consists in ensuring that most of the $k$ devices, which are approximately uncorrelated, violate the Bell inequality close to maximally. In order to do so we perform a statistical test to try to estimate the average Bell inequality violation of them, and reject if such value is not sufficiently close to zero. Here we have to address the difficulty that the inputs to the devices are chosen from a $\varepsilon$-SV source, while the value of the Bell inequality that we would like to estimate is calculated over uniform input. This is the content of Lemma \ref{Azuma-estimation}.

In the next sections we provide a full description of the protocol and its correctness proof.

\begin{figure}
\scalebox{0.45}
{\includegraphics{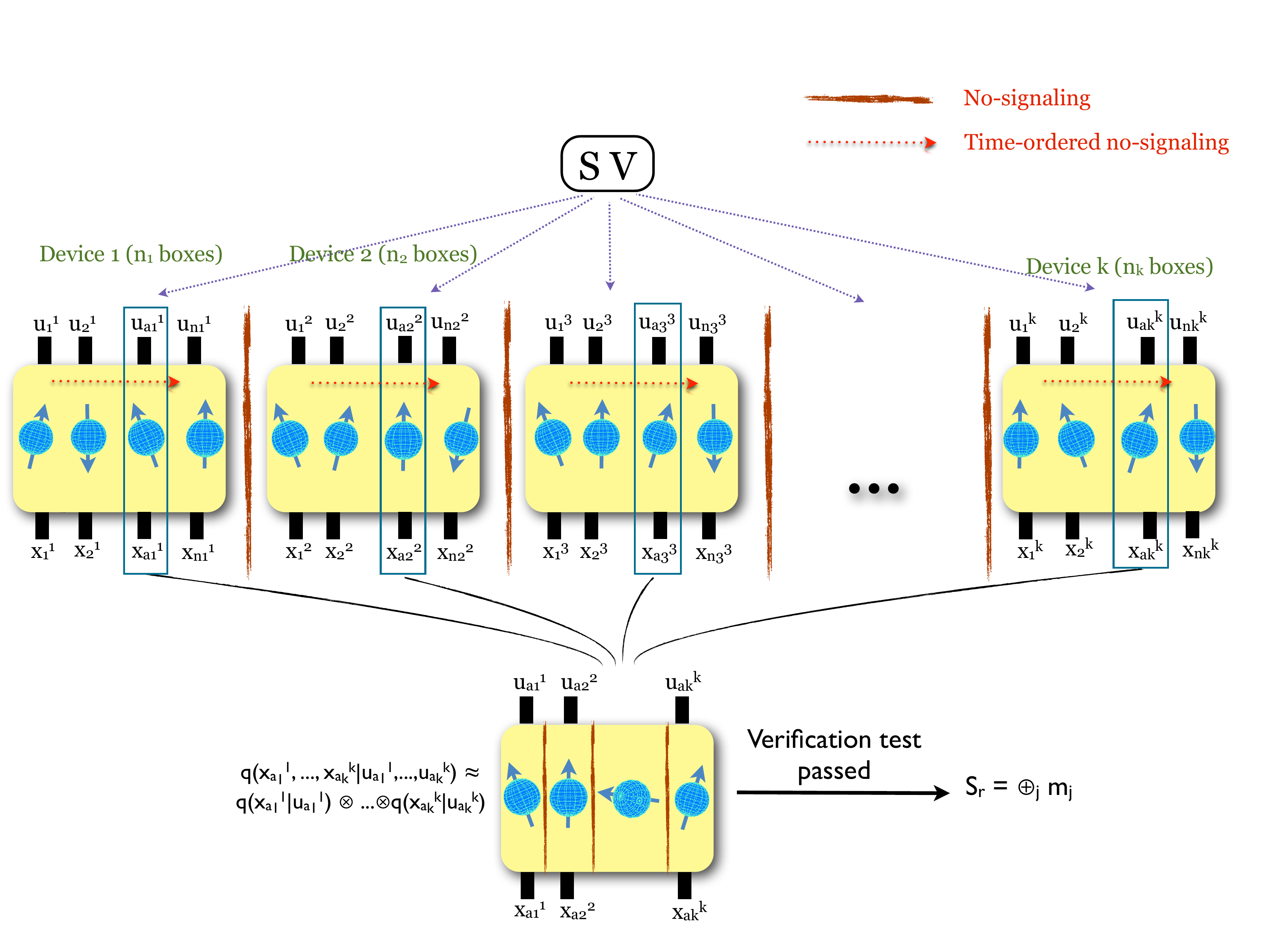}}
\caption{Illustration of the protocol for randomness amplification}
\label{no-signaling-fig}
\end{figure}

\vspace{0.3 cm}

\section{Preliminaries}
In this section we state the Bell inequality we will employ, explain what are the assumptions we impose on the devices, and then define the notion of randomness relative to an eavesdropper that we will consider.

\subsection{The Bell inequality} \label{bellinequality}
The inequality we consider for the task of randomness amplification involves four spatially separated parties with measurement settings $\textbf{u} = \{u^1, u^2, u^3, u^4\}$ and respective outcomes $\textbf{x} = \{x^1, x^2, x^3, x^4\}$. Each party chooses one of two measurement settings with two outcomes each so that $u^i \in \{0, 1\}$ and $x^i \in \{0,1\}$ for $i \in \{1,..,4\}$. Half of the $2^4$ possible measurements enter the inequality and these can be divided into two sets 
\begin{equation}
\textbf{U}_0 = \{ \{0001\}, \{0010\}, \{0100\}, \{1000\} \} \hspace{0.3 cm} \text{and} \hspace{0.3 cm} \textbf{U}_1 = \{ \{0111\}, \{1011\}, \{1101\}, \{1110\} \}. 
\end{equation}
The inequality is then \cite{Guhne}
\begin{eqnarray}
\label{Bell-ineq}
\sum_{\textbf{x}, \textbf{u}} (\texttt{I}_{\oplus_{i=1}^{4} x^i = 0} \; \texttt{I}_{\textbf{u} \in \textbf{U}_0} \; + \texttt{I}_{\oplus_{i=1}^{4} x^i = 1} \; \texttt{I}_{\textbf{u} \in \textbf{U}_1}) \; P(\textbf{x}|\textbf{u}) \geq 2,
\end{eqnarray}
where the indicator function $\texttt{I}_{L} = 1$ if $L$ is true and $0$ otherwise. The local hidden variable bound is $2$ and there exist no-signaling distributions that reach the algebraic limit of $0$. For any no-signaling box represented by a vector of probabilities $\{P(\textbf{x}|\textbf{u})\}$, the Bell inequality may be written as
\begin{equation}
\textbf{B}.\{P(\textbf{x} | \textbf{u})\} = \sum_{\textbf{x}, \textbf{u}} B(\textbf{x} , \textbf{u}) P(\textbf{x} | \textbf{u}) \geq 2, 
\end{equation}
where $\textbf{B}$ is an indicator vector for the Bell inequality with $2^4 \times 2^4$ entries 
\begin{equation}
B(\textbf{x} , \textbf{u}) = \texttt{I}_{\oplus_{i=1}^{4} x^i = 0} \; \texttt{I}_{\textbf{u} \in \textbf{U}_0} + \texttt{I}_{\oplus_{i=1}^{4} x^i = 1} \; \texttt{I}_{\textbf{u} \in \textbf{U}_1}.
\end{equation} 

Consider the quantum state 
\begin{equation} \label{state}
|\Psi \rangle = \frac{1}{\sqrt{2}} (|\phi_{-}\rangle |\tilde{\phi}_{+} \rangle + |\psi_{+} \rangle |\tilde{\psi}_{-} \rangle),
\end{equation}
where $|\phi_{-} \rangle = \frac{1}{\sqrt{2}}(|0\rangle |0\rangle - |1\rangle |1\rangle)$, $|\psi_{+}\rangle = \frac{1}{\sqrt{2}} (|0\rangle |1 \rangle + |1\rangle |0\rangle)$, $|\tilde{\phi}_{+}\rangle = \frac{1}{\sqrt{2}}(|0\rangle |+\rangle + |1\rangle |-\rangle)$, and $|\tilde{\psi}_{-}\rangle = \frac{1}{\sqrt{2}}(|0\rangle |-\rangle - |1\rangle |+\rangle)$. Measurements in the $X$ basis 
\begin{equation} \label{mea1}
\{|+\rangle = \frac{1}{\sqrt{2}} (|0\rangle + |1\rangle), |-\rangle = \frac{1}{\sqrt{2}}(|0\rangle - |1\rangle)\}
\end{equation}
correspond to $u^i = 0$ and measurements in the $Z$ basis 
\begin{equation} \label{mea2}
\{|0\rangle, |1\rangle\} 
\end{equation}
correspond to $u^i = 1$ for each of the four parties $i \in \{1, \dots, 4\}$. These measurements on $|\Psi \rangle$ lead to the algebraic violation of the inequality, i.e., the sum of the probabilities appearing in the inequality is zero. 

The reason for the choice of this Bell inequality is twofold. Firstly, as we have seen, there exist quantum correlations achieving the maximal no-signaling violation of the inequality, which implies that free randomness amplification starting from any initial $\epsilon$ of the SV source may be possible. Secondly, we will show (in Lemma \ref{lin-prog}) that for any measurement setting appearing in the inequality $\textbf{u} \in \textbf{U}_0 \cup \textbf{U}_1$, the majority function of the first three outputs $\text{maj}(x^1, x^2, x^3)$ (where $\text{maj}(x^1,x^2,x^3) = 0$ if at least two of $\{x^1, x^2, x^3 \}$ are $0$ and $1$ otherwise) cannot be predicted with certainty, given any no-signaling box that violates the inequality close to its algebraic maximum value.

\subsection{Assumptions on the Devices} \label{assumptions}

In the protocol, we consider $4k+1$ devices that cannot signal among themselves and assume that one of the devices is help by an external party (for example, an eavesdropper Eve). We describe the correlations in the $4k$ devices by the following joint probability distribution:
\begin{equation}
P(\textbf{x}^1_{\leq n_1}, \ldots, \textbf{x}^k_{\leq n_k},  \mathbb{Z} | \textbf{u}^1_{\leq n_1}, \ldots, \textbf{u}^k_{\leq n_k}, \mathbb{W}),
\end{equation}
where e.g. $\textbf{x}^1_{\leq n_1}$ denotes the vector $(\textbf{x}^1_{1}, \ldots, \textbf{x}^1_{n_1})$. For device $j$ with $1 \leq j \leq k$, the inputs and outputs of the $l_j$-th box (associated to the $l_j$-th use of the device) are given by  $\textbf{u}^j_{l_j}$ and $\textbf{x}^j_{l_j}$, respectively. Likewise the input and output of the last device (held by the eavesdropper) are given by $\mathbb{W}$ and $\mathbb{Z}$. One may think of the adversary holding the set of random variables $\mathbb{Z}, \mathbb{W}$ and supplying $4k$ devices which produce the conditional probability distribution $P$ whose behavior depends on $\mathbb{Z}$ and $\mathbb{W}$.  We say $P$ is a $(4k; n_1, \ldots, n_k)$ time-ordered non-signaling box, meaning that it is non-signaling between each of the $4k$ devices and time-ordered non-signaling within each device. 

As in previous works on randomness amplification \cite{Renner, Acin, Pawlowski, Grudka, Scarani}, we make the assumption that the $\varepsilon$-SV source and the boxes can be correlated with each other only through the no-signaling adversary Eve. In other words, we assume that the SV source, Eve's random variables $(\mathbb{Z}, \mathbb{W})$ and the box $P$ constitute a Markov chain, so that given $(\mathbb{Z}=z, \mathbb{W}=w)$, the box $P(\textbf{x}^1_{\leq n_1}, \ldots, \textbf{x}^k_{\leq n_k} | \textbf{u}^1_{\leq n_1}, \ldots, \textbf{u}^k_{\leq n_k}, \mathbb{Z} =z, \mathbb{W} = w)$ is independent of the bits produced by the source. 




\subsection{Randomness Criterion}

To quantify the quality of the output we will use of the distance to uniform of a random variable $S \in \Sigma$, conditioned on Eve's input and output:
\begin{eqnarray}
\label{distance-exp}
\textit{d}( S | \mathbb{Z}, \mathbb{W}) := \frac{1}{2} \max_{s, w, z}  |P(S = s| \mathbb{Z} = z,\mathbb{W} = w) - 1/|\Sigma||.  
\end{eqnarray}

Although this function is convenient to work, it is not universally composable. A better definition of randomness relative to an eavesdropper is the following (see e.g. \cite{Hanggi}):
\begin{eqnarray}
\label{distance-exp2}
\textit{d}_{c}( S | \mathbb{Z}, \mathbb{W}) := \frac{1}{2} \sum_{s=0}^{|\Sigma|-1}  \max_{w} \sum_{z} P( z | w )  |P( S = s | \mathbb{Z} = z,\mathbb{W} = w) - 1/|\Sigma||, 
\end{eqnarray}
with $|\Sigma|$ the size of $\Sigma$.

However there is the following relation between them:
\begin{lemma} \label{conversioncomposable}
For a random variable $S \in \Sigma$,
\begin{equation}
\textit{d}_{c}( S | \mathbb{Z}, \mathbb{W}) \leq |\Sigma| \hspace{0.02 cm} \textit{d}( S | \mathbb{Z}, \mathbb{W}).
\end{equation}
\end{lemma}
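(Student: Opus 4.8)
The plan is to bound the composable distance $\textit{d}_c$ termwise by the non-composable one $\textit{d}$, losing only the factor $|\Sigma|$ from the sum over outcomes. Set $M := 2\,\textit{d}(S|\mathbb{Z},\mathbb{W}) = \max_{s,w,z} |P(S=s|\mathbb{Z}=z,\mathbb{W}=w) - 1/|\Sigma||$, so that by definition $|P(S=s|\mathbb{Z}=z,\mathbb{W}=w) - 1/|\Sigma|| \le M$ for every triple $(s,w,z)$.

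First I would fix an arbitrary outcome $s\in\Sigma$ and an arbitrary value $w$ of Eve's input, and note that
\[
\sum_z P(z|w)\,|P(S=s|\mathbb{Z}=z,\mathbb{W}=w) - 1/|\Sigma|| \;\le\; \sum_z P(z|w)\, M \;=\; M,
\]
using only that $\{P(z|w)\}_z$ is a probability distribution and hence sums to one. Since this holds for every $w$, taking the maximum over $w$ yields $\max_w \sum_z P(z|w)\,|P(S=s|\mathbb{Z}=z,\mathbb{W}=w) - 1/|\Sigma|| \le M$ for each $s$.

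Then I would sum this over the $|\Sigma|$ possible outcomes $s$ and divide by two:
\[
\textit{d}_c(S|\mathbb{Z},\mathbb{W}) \;=\; \frac12 \sum_{s=0}^{|\Sigma|-1} \max_w \sum_z P(z|w)\,|P(S=s|\mathbb{Z}=z,\mathbb{W}=w) - 1/|\Sigma|| \;\le\; \frac{|\Sigma|}{2}\, M \;=\; |\Sigma|\,\textit{d}(S|\mathbb{Z},\mathbb{W}),
\]
which is the claimed inequality. There is essentially no obstacle here: the only estimate used is the trivial replacement of each deviation by its maximum $M$, and the factor $|\Sigma|$ is precisely the cost of bounding $|\Sigma|$ summands by the same quantity rather than exploiting cancellation between different $s$. (A sharper constant would follow by keeping the $s$-sum inside and recognizing $\sum_s|P(S=s|z,w)-1/|\Sigma||$ as twice a total-variation distance, but the stated bound is all that is needed later.)
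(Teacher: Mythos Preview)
Your proof is correct and follows essentially the same route as the paper's: bound each absolute deviation by the global maximum $M=2\,\textit{d}(S|\mathbb{Z},\mathbb{W})$, use that $\sum_z P(z|w)=1$ to collapse the $z$-average, and then pay the factor $|\Sigma|$ from the sum over outcomes. The paper orders the two estimates as ``sum over $s$ $\leq |\Sigma|\max_{s,w}$'' followed by ``$\sum_z P(z|w)\,\cdot \leq \max_z$'', whereas you perform the $z$-average bound first and the $s$-sum last, but the content is identical.
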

\begin{proof}
\begin{eqnarray}
&& \frac{1}{2} \sum_{s=0}^{|\Sigma|-1}  \max_{w} \sum_{z} P( z | w )  |P( S = s | \mathbb{Z} = z,\mathbb{W} = w) - 1/|\Sigma|| \nonumber \\  
&\leq& \frac{|\Sigma|}{2} \hspace{0.02 cm}  \max_{s, w}    \sum_{z} P( z | w )  |P( S = s | \mathbb{Z} = z,\mathbb{W} = w) - 1/|\Sigma|| \nonumber \\  
&\leq& \frac{|\Sigma|}{2} \hspace{0.02 cm}  \max_{s, w, z}  |P( S = s | \mathbb{Z} = z,\mathbb{W} = w) - 1/|\Sigma|| \nonumber \\
&=& |\Sigma| \hspace{0.02 cm} \textit{d}( S | \mathbb{Z}, \mathbb{W}).
\end{eqnarray}
\end{proof}





\begin{figure}[h]
\begin{protocol*}{Protocol}
\begin{enumerate}
\item The $\varepsilon$-SV source generates $4M$ bits which are used by the parties to choose the measurement settings $\textbf{u}^1_{\leq m_1}, \ldots, \textbf{u}^k_{\leq m_k}$ for the $4k$ devices, where $M = \sum_{j=1}^k m_j$. The measurements are performed sequentially on each device. The devices produce the output bits $\textbf{x}^1_{\leq m_1}, \ldots, \textbf{x}^k_{\leq m_k}$. After the measurements, the parties discard the settings that do not appear in the Bell inequality in Eq. (\ref{Bell-ineq}) and a number $n_j$ of runs remain in the $j$-th device for $1 \leq j \leq k$. 
\item The parties choose one box $a_j$ from each device using $\log{n_j}$ bits from the $\varepsilon$-SV source, for $1 \leq j \leq k$. 
\item The parties perform an estimation of the violation of the Bell inequality in the chosen boxes by computing the empirical average $Z_k :=  \frac{1}{k} \sum_{i=1}^{k} B(\textbf{x}_{a_i}, \textbf{u}_{a_i})$. The protocol is aborted unless $Z_k \leq \left( \frac{1}{2} - \varepsilon  \right)^4 \frac{\delta}{2} (1-\mu)$ (with fixed constants $\delta, \mu$).
\item Conditioned on not aborting in the previous step, the parties compute $\textsl{m}_j := \text{maj}(x^1_j, x^2_j, x^3_j)$ for all $1 \leq j \leq k$, and output the bit $\textsl{S}_r = \oplus_{j=1}^{k} \textsl{m}_j$. 
\end{enumerate}
\end{protocol*}
\caption{Protocol for device-independent randomness amplification}
\label{protocol}
\end{figure}

\section{Proof of correctness of the Protocol}

We will be interested in the protocol with the following parameters:
\begin{equation} \label{choicen}
n_1 = 1, \hspace{0.3 cm} n_i^{1-\log{(1+2\varepsilon)}} = 8 \ln{(2)} k^2 t^3 n_{i-1}.
\end{equation}
for a parameter $t > 0$.

\begin{prop} \label{correctnessprot1}
Let $(n_1, \ldots, n_k)$ be given by Eq. (\ref{choicen}). Then conditioned on not aborting, the bit $\textsl{S}_r$ produced by the protocol is such that 
\begin{equation}
\label{no-signaling-dist}
\textit{d}_c(\textsl{S}_{r}| \mathbb{Z}, \mathbb{W}) \leq \left( \frac{11 + 7 \delta}{16} \right)^{\mu k}  + 2 e^{-\frac{k \left( \frac{1}{2} - \varepsilon  \right)^8 (1 - \mu)^2 \delta^2}{8} } + 4/\sqrt{t}.
\end{equation}
Moreover if the box is realized by performing measurements which are $O(\delta (1-\mu) (1 - 2\varepsilon)^4)$-close to either one of the measurements of Eqs. (\ref{mea1}, \ref{mea2}) on states which are $O(\delta (1-\mu) (1 - 2\varepsilon)^4)$-close to the state of Eq. (\ref{state}), then the protocol accepts with high probability.
\end{prop}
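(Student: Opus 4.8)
The plan is to prove the two halves of the proposition separately — the randomness estimate \eqref{no-signaling-dist} by a chain of reductions, each matching one term on the right-hand side, and then the completeness claim by a continuity-plus-concentration argument. \textbf{Step 1 (de Finetti reduction).} The boxes $a_1,\dots,a_k$ are drawn from the $\varepsilon$-SV source only after all uses of the devices, so Lemma~\ref{deFinetti-bound2} applies with the recursive sizes of Eq.~(\ref{choicen}); those sizes are chosen precisely so that the de Finetti error of that lemma is $O(1/\sqrt t)$. Hence, on average over the transcript of all earlier uses, the conditional distribution of the $k$ selected boxes — given that transcript and Eve's $(\mathbb Z,\mathbb W)$ — is $O(1/\sqrt t)$-close in variation to a product box $\bigotimes_{j=1}^k P_j$. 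Since steps 3 and 4 of the protocol are classical post-processing of these boxes, variation distance does not grow, so it suffices to prove the remainder of \eqref{no-signaling-dist} for a \emph{fixed} product box, at the cost of the $4/\sqrt t$ term.

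\textbf{Step 2 (the test certifies a small average violation).} For a fixed product box the quantity $\bar B:=\frac1k\sum_{j=1}^k\mathbf B\cdot P_j$ is a number, so step 3 is a genuine one-sided test of ``$\bar B\le\delta(1-\mu)$''. The subtlety is that $Z_k$ is evaluated on inputs from the SV source (conditioned on lying in $\mathbf U_0\cup\mathbf U_1$), not on uniform inputs; but every $4$-bit input has SV-probability at least $(\tfrac12-\varepsilon)^4$ and $\mathbf B\cdot P_j=\sum_{\mathbf u}\sum_{\mathbf x}B(\mathbf x,\mathbf u)P_j(\mathbf x|\mathbf u)$ is a sum of non-negative terms, so $\mathbb E[Z_k]\ge(\tfrac12-\varepsilon)^4\,\bar B$. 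This, with the (martingale) concentration of $Z_k$, is exactly Lemma~\ref{Azuma-estimation}, and it gives: except on an event of probability at most $2e^{-k(1/2-\varepsilon)^8(1-\mu)^2\delta^2/8}$ (the second term of \eqref{no-signaling-dist}), not aborting forces $\bar B\le\delta(1-\mu)$. \textbf{Step 3 (Markov, Lemma~\ref{lin-prog}, Lemma~\ref{XOR-lemma}).} Markov's inequality applied to $\bar B\le\delta(1-\mu)$ produces at least $\mu k$ indices $j$ with $\mathbf B\cdot P_j\le\delta$; for each such index Lemma~\ref{lin-prog} bounds the bias of $\textsl m_j=\text{maj}(x^1_j,x^2_j,x^3_j)$ — for whatever input is actually used — into $[\tfrac14-O(\delta),\tfrac34+O(\delta)]$. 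Since the boxes are product, the bias of $\textsl S_r=\oplus_j\textsl m_j$ away from uniform factorises, the $\ge\mu k$ ``good'' factors being at most $\tfrac12+O(\delta)$ and the $\le(1-\mu)k$ others at most $1$; Lemma~\ref{XOR-lemma} turns this into a bound on $d(\textsl S_r\mid\mathbb Z,\mathbb W)$, and Lemma~\ref{conversioncomposable} (with $|\Sigma|=2$) converts it into the first term $\bigl(\tfrac{11+7\delta}{16}\bigr)^{\mu k}$ of \eqref{no-signaling-dist}. Assembling Steps 1--3 yields the claimed inequality.

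The step I expect to be the main obstacle is reconciling the conditionings. The acceptance event of step 3 and the output bit of step 4 are functions of the \emph{same} selected boxes' inputs and outputs, so one cannot naively condition on ``not abort'' and keep treating the $\textsl m_j$ as independent. This is exactly why Lemma~\ref{deFinetti-bound2} must be stated as closeness to a product conditioned on the \emph{entire} earlier transcript, and why Step 2 must use the quantitative Lemma~\ref{Azuma-estimation} rather than a bare Chernoff bound: only once the product box is fixed is $\bar B$ deterministic and the test a true hypothesis test. Getting the three conditionings — de Finetti component, typical transcript, acceptance — to line up while preserving the product structure on the good coordinates is where I expect most of the work to be.

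For the completeness claim, suppose each of the $4k$ boxes is produced by single-qubit measurements $\eta$-close to Eqs.~(\ref{mea1}), (\ref{mea2}) on a state $\eta$-close to Eq.~(\ref{state}), with $\eta=O\bigl(\delta(1-\mu)(1-2\varepsilon)^4\bigr)$. By continuity of the Born rule each device's box $P_j$ is $O(\eta)$-close in variation to the ideal box $P^{*}$, which attains $\mathbf B\cdot P^{*}=0$; since $\mathbf B$ is a $\{0,1\}$ vector, $\sum_{\mathbf x}B(\mathbf x,\mathbf u)P_j(\mathbf x|\mathbf u)=O(\eta)$ for every relevant $\mathbf u$. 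The boxes are i.i.d.\ and uncorrelated with any eavesdropper, so the selected boxes are literally product and Step 1 costs nothing; hence $\mathbb E[Z_k]$ is an SV-weighted average of these per-input quantities, so $\mathbb E[Z_k]=O(\eta)$, which is $<(\tfrac12-\varepsilon)^4\tfrac\delta2(1-\mu)$ once the constant in $\eta$ is chosen small enough (using $(1-2\varepsilon)^4=16(\tfrac12-\varepsilon)^4$). A Hoeffding bound then gives $Z_k\le(\tfrac12-\varepsilon)^4\tfrac\delta2(1-\mu)$ with probability $1-e^{-\Omega(k\delta^2(1-\mu)^2(1/2-\varepsilon)^8)}$; finally one checks that taking $m_j$ large enough leaves at least the required $n_j$ Bell-relevant runs with high probability (each run survives with probability $\ge 8(\tfrac12-\varepsilon)^4>0$), so the protocol reaches step 4 and accepts with high probability.
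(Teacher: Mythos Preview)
Your proposal is correct and follows essentially the same approach as the paper's proof: de Finetti (Lemma~\ref{deFinetti-bound2}) with the sizes of Eq.~(\ref{choicen}) to reduce to a product box at cost $O(1/\sqrt{t})$, then Lemma~\ref{Azuma-estimation} for the acceptance test, then Lemmas~\ref{lin-prog} and~\ref{XOR-lemma} on the $\mu k$ good product factors, and finally Lemma~\ref{conversioncomposable} to pass from $d$ to $d_c$; the robustness half is Lemma~\ref{robustnesssection}. The only cosmetic difference is that you split off the Markov step (``$\bar B\le\delta(1-\mu)\Rightarrow$ at least $\mu k$ good indices'') from the Azuma estimation, whereas the paper absorbs it into the contrapositive statement of Lemma~\ref{Azuma-estimation}.
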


\noindent \textit{Remark:} As a corollary of the theorem we find that using $O(l \log(l/\varepsilon))$ devices we can extract $l$ bits which are $\varepsilon$-close in variational distance to $l$ uniform bits.

\begin{proof} 

First we apply the results of Lemma \ref{deFinetti-bound2} to the time-ordered no-signaling box $P(\textbf{x}^1_{\leq n_1}, \ldots, \textbf{x}^k_{\leq n_k} | \textbf{u}^1_{\leq n_1}, \ldots, \textbf{u}^k_{\leq n_k}, \mathbb{Z}=z, \mathbb{W}=w)$. Under the Markov assumption that the SV source and this box are uncorrelated, Lemma \ref{deFinetti-bound2} gives that the $k$ boxes chosen one from each device with the $\varepsilon$-SV source are uncorrelated with high probability. For uncorrelated boxes, 
\begin{equation}
\delta_l =  \textbf{B}.\{P_{  \textbf{u}_{<l},   \textbf{x}_{<l}   }(\textbf{x}_l | \textbf{u}_l) \} = \textbf{B}.\{P(\textbf{x}_l | \textbf{u}_l) \},
\end{equation}
for $1 \leq l \leq k$. Lemma \ref{Azuma-estimation} applied to these $k$ boxes, in turn, implies that when the test accepts, $(q(\textbf{x}^1_{a_1}, \dots, \textbf{x}^k_{a_k} | \textbf{u}^1_{a_1}, \dots, \textbf{u}^k_{a_k}), (\textbf{u}^1_{a_1}, \dots, \textbf{u}^k_{a_k})), (\textbf{x}^1_{a_1}, \dots, \textbf{x}^k_{a_k})$ are $(\mu, \delta)$ good with high probability. In other words, when the test accepts, with high probability a fraction $\mu k$ of the boxes has Bell value smaller than $\delta$. We may therefore, up to the error in the de Finetti bound (given by Eq. (\ref{deFinetti})) and in the verification procedure (given in Lemma \ref{Azuma-estimation}), apply Lemma \ref{lin-prog} to the $k$ uncorrelated boxes (of which $\mu k$ have a good Bell value). Then by Lemma \ref{XOR-lemma}, the XOR of the bits from these boxes gives a perfectly free random bit for $k$ sufficiently large.

We begin by estimating the error in the de Finetti bound in Lemma \ref{deFinetti-bound2}. Let us apply Lemma \ref{deFinetti-bound2}  with  $t_i = kt$, for all $2 \leq i \leq k$, and block sizes according to $n_i^{1-\log{(1+2\varepsilon)}} = 8 \ln{(2)} k^3 t^3 n_{i-1}$. Then we obtain that 
\begin{equation}
\Pr_{(a_1, \dots, a_k) \sim \nu([n_1] \times \dots \times [n_k])} \left( T \geq \frac{1}{t} \right) \leq \frac{1}{t}.
\end{equation}
with $T$ given by Eq. (\ref{defT}). Thus with probability larger than $1-1/t$ over the $(a_1, \ldots, a_k)$ we have $T \leq \frac{1}{t}$. By Markov inequality we have that for such good choices of $(a_1, \ldots, a_k)$,
\begin{equation}
\Pr_{\substack{\textbf{u}^1_{< a_1}, \ldots, \textbf{u}^k_{< a_k} \sim \nu_{a_1, \ldots, a_k} \\ \textbf{x}^1_{< a_1}, \ldots, \textbf{x}^k_{< a_k} \sim P}} ( T' \geq \eta ) \leq \frac{T}{\eta} \leq \frac{1}{t \eta}.
\end{equation}
with
\begin{equation} \label{Tprimeound}
T' :=  \mathbb{E}_{\textbf{u}^1_{a_1}, \ldots, \textbf{u}^k_{a_k}} \left \Vert  q(\textbf{x}^1_{a_1}, \ldots, \textbf{x}^k_{a_k} | \textbf{u}^1_{a_1}, \ldots, \textbf{u}^k_{a_k}) - q(\textbf{x}^1_{a_1} | \textbf{u}^1_{a_1}) \otimes \ldots \otimes q(\textbf{x}^k_{a_k} | \textbf{u}^k_{a_k}) \right \Vert_1,
\end{equation}
Choosing $\eta = 1/\sqrt{t}$ we find that with probability larger than $1 - 1/t - 1/\sqrt{t}$, the $k$ chosen boxes and the chosen inputs will be such that $T' \leq 1/ \sqrt{t}$.

For the ideal product state $q(\textbf{x}^1_{a_1} | \textbf{u}^1_{a_1}) \otimes \ldots \otimes q(\textbf{x}^k_{a_k} | \textbf{u}^k_{a_k})$, when the verification test in step $3$ of the protocol (Fig.~\ref{protocol}) accepts, we may infer that $\mu k$ of the boxes have good Bell value with probability $1 - \exp\left( -k \left( \frac{1}{2} - \varepsilon  \right)^8 (1 - \mu)^2 \delta^2/  8 \right)$. Then applying Lemma \ref{lin-prog} and Lemma \ref{XOR-lemma} we find 
\begin{equation}
d(\textsl{m}_{1} \oplus \dots \oplus \textsl{m}_{k} | \mathbb{Z}, \mathbb{W}) \leq \frac{1}{2} \left( \frac{11 + 7 \delta}{16} \right)^{\mu k} \left( 1 - e^{-\frac{k \left( \frac{1}{2} - \varepsilon  \right)^8 (1 - \mu)^2 \delta^2}{8} } \right) + e^{-\frac{k \left( \frac{1}{2} - \varepsilon  \right)^8 (1 - \mu)^2 \delta^2}{8} }.
\end{equation}

Therefore by Eq. (\ref{Tprimeound}), if the test accepts
\begin{equation}
d(\textsl{m}_{1} \oplus \dots \oplus \textsl{m}_{k} | \mathbb{Z}, \mathbb{W}) \leq \frac{1}{2} \left( \frac{11 + 7 \delta}{16} \right)^{\mu k}  + e^{-\frac{k \left( \frac{1}{2} - \varepsilon  \right)^8 (1 - \mu)^2 \delta^2}{8} } + 2/\sqrt{t}.
\end{equation}
Eq. (\ref{no-signaling-dist}) follows from Lemma \ref{conversioncomposable}.

The robustness of the protocol follows from Lemma \ref{robustnesssection}.
\end{proof}


\section{Tools for the correctness proofs}

\subsection{Randomness Amplification from Uncorrelated Good Devices}
\label{uncorrelated-boxes}
The first step in the proof of the protocol is the particular case where one has $m$ uncorrelated boxes $P_1(\textbf{x}_1 | \textbf{u}_1), \ldots, P_m(\textbf{x}_m | \textbf{u}_m)$ which are good in the sense that for all $1 \leq j \leq m$
\begin{equation} 
\label{goodindividual}
\textbf{B}. \{ P_j(\textbf{x}_j | \textbf{u}_j) \} \leq \delta,
\end{equation}
for some $\delta \geq 0$. We first show that the predictability of the majority bit obtained from a good box is bounded from above by a linear function of $\delta$. Then considering a large number $m$ of such good boxes that are uncorrelated from each other, we show that the XOR of the majority bits from these boxes can give rise to a perfect random bit. 

\begin{lemma} \label{lin-prog}
Consider a $4$-partite no-signaling box $P (\textbf{x}| \textbf{u})$ satisfying Eq. (\ref{goodindividual})
for some $\delta \geq 0$. Let $\textsl{m} := \text{maj}(x^1, x^2, x^3)$. Then
\begin{equation} \label{prod-dist}
d\left( \textsl{m} | \mathbb{Z}, \mathbb{W}  \right) \leq  \frac{1}{2}\left( \frac{11+ 7 \delta}{16}  \right).          
\end{equation}
\end{lemma}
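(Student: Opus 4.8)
The plan is to read Eq.~(\ref{prod-dist}) off from a linear program, in the spirit of the linear-programming argument used in \cite{Acin}. Since the box held conditioned on $\mathbb{Z}=z,\mathbb{W}=w$ is again a $4$-partite no-signaling box with Bell value at most $\delta$, and $d(\textsl{m}\mid\mathbb{Z},\mathbb{W})$ is the maximum over $z,w$ and over $s\in\{0,1\}$ of $\tfrac12\bigl|P(\textsl{m}=s\mid\textbf{u},\mathbb{Z}=z,\mathbb{W}=w)-\tfrac12\bigr|$, it suffices to bound, for every fixed $4$-party no-signaling box $P$ with $\textbf{B}.P\le\delta$, the deviation of $P(\textsl{m}=s\mid\textbf{u})=\sum_{\textbf{x}:\ \text{maj}(x^1,x^2,x^3)=s}P(\textbf{x}\mid\textbf{u})$ from $\tfrac12$, uniformly over $s$ and over the inputs $\textbf{u}$ occurring in the inequality. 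This is a linear functional of the box, so its maximum over the polytope $\{\,P\ \text{no-signaling},\ \textbf{B}.P\le\delta\,\}$ — whose facets are non-negativity, the normalizations $\sum_{\textbf{x}}P(\textbf{x}\mid\textbf{u})=1$, the no-signaling marginal equalities, and the single Bell inequality (\ref{Bell-ineq}) — is the optimum of an explicit linear program.

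Before solving it I would use the symmetry group of the Bell functional $B$ to collapse the number of programs. Permuting the four parties leaves $B$ invariant; flipping every input $u^i\mapsto 1-u^i$ exchanges $\textbf{U}_0\leftrightarrow\textbf{U}_1$ and, composed with flipping the single output $x^4$ (which restores the parity $\oplus_{i=1}^4 x^i$), is again a symmetry of $B$ that leaves $\text{maj}(x^1,x^2,x^3)$ untouched; and flipping all four outputs $x^i\mapsto 1-x^i$ leaves $B$ invariant while swapping $\text{maj}=0\leftrightarrow\text{maj}=1$. The last symmetry reduces to $s=0$; the other two reduce the admissible inputs to the two orbit representatives $\textbf{u}=(0,0,0,1)$ (distinguished bit on party $4$) and $\textbf{u}=(1,0,0,0)$ (distinguished bit on a party in $\{1,2,3\}$). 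One may additionally symmetrize the optimal box over the stabilizer of the chosen $\textbf{u}$, which sharply cuts the number of effectively free variables.

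For the remaining one or two programs I would invoke LP/Lagrangian duality: writing $\ell$ for the objective functional,
\[
\max_{P\ \text{NS},\ \textbf{B}.P\le\delta}\ell(P)\;=\;\min_{\lambda\ge0}\Bigl(\lambda\delta+\max_{P\ \text{NS}}(\ell-\lambda\textbf{B}).P\Bigr),
\]
and the inner maximum is attained at a vertex of the no-signaling polytope, so the right-hand side is piecewise-linear in $\lambda$. Concretely I would (i) exhibit an explicit no-signaling box that is Bell-feasible and attains the value appearing in Eq.~(\ref{prod-dist}) — the optimum is affine in $\delta$, being the value of an LP whose right-hand side depends affinely on $\delta$ with a single affine piece active in the relevant range — and (ii) certify optimality by producing a dual solution: a multiplier $\lambda\ge0$ for the Bell constraint together with multipliers for the normalization and no-signaling equalities whose combination dominates $\ell$ and whose dual value matches. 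Taking the larger of the two representative values and feeding it into the definition of $d(\cdot\mid\mathbb{Z},\mathbb{W})$ then yields Eq.~(\ref{prod-dist}).

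The main obstacle is part (ii): a feasible box only shows the optimum is at least the claimed number, while the content of the lemma is the matching upper bound, i.e. the dual certificate (equivalently, checking $(\ell-\lambda\textbf{B}).P$ against the relevant no-signaling vertices for the optimal $\lambda$). It is precisely to make this hand-checkable that the symmetry reduction of the second paragraph is needed; setting up the LP, reducing the cases, and the final translation into $d(\textsl{m}\mid\mathbb{Z},\mathbb{W})$ are routine.
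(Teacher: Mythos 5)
Your approach is essentially the paper's: cast the bias of $\text{maj}(x^1,x^2,x^3)$ as a linear program over no-signaling boxes with Bell value at most $\delta$, and obtain the upper bound from a feasible dual vector --- the paper writes exactly this primal/dual pair, exhibits a $\lambda^*$ for the input $(0,0,0,1)$ in an appendix giving $c^T\lambda^* = (11+7\delta)/16$, and asserts analogous certificates for the remaining seven settings. Your symmetry reduction (the $S_3$-action on parties $1,2,3$, the input flip $\mathbf{u}\mapsto\bar{\mathbf{u}}$ composed with $x^4\mapsto 1-x^4$, and the global output flip collapsing $s$) is correct and is a useful refinement the paper does not spell out, cutting the eight settings to the two orbit representatives you name; but, as you yourself flag, the substance of the lemma is the explicit dual certificate, which neither your sketch nor the paper's main text actually writes down.
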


\begin{proof}
In order to prove the lemma, we first formulate the distance from uniform of the majority bit $\textsl{m}$ as a linear program and obtain an upper bound on the distance from a feasible solution of the dual program. The distance from uniform in Eq. (\ref{distance-exp}) can be obtained by the following linear program
\begin{eqnarray}
\label{lin-prog1}
d \left(\textsl{m} | \mathbb{Z}, \mathbb{W} \right) &=& \max_{ \{ P \}}: \frac{1}{2} \textit{M}^T .\{ P(\textbf{x} | \textbf{u}) \} \nonumber \\
&&s.t. \; \; \textit{A}. \{ P( \textbf{x} | \textbf{u}) \} \leq \textit{c}.
\end{eqnarray}
Here, the indicator vector $\textit{M}$ is a $2^4 \times 2^4$ element vector with entries 
\begin{equation}
M(\textbf{x}, \textbf{u}) = \texttt{I}_{\textbf{u} = \textbf{u}^*} \texttt{I}_{\text{maj}(x^1, x^2, x^3) = 0} -\texttt{I}_{\textbf{u} = \textbf{u}^*} \texttt{I}_{\text{maj}(x^1, x^2, x^3) = 1}
\end{equation}
for any chosen measurement setting $\textbf{u}^* \in \textbf{U}_0 \cup \textbf{U}_1$. Analogous programs can be formulated for each of the $8$ measurement settings appearing in the Bell inequality in Eq. (\ref{Bell-ineq}). The constraint on the box $\{P(\textbf{x} | \textbf{u})\}$ written as a vector with $2^4 \times 2^4$ entries is given by the matrix $\textit{A}$ and the vector $\textit{c}$. These encode the no-signaling constraints between the four parties, the normalization and the positivity constraints on the probabilities $P(\textbf{x} | \textbf{u})$. In addition, $\textit{A}$ and $\textit{c}$ also encode the condition that $\textbf{B}.\{ P(\textbf{x}| \textbf{u}) \} \leq \delta$ where $\delta$ is a parameter that encodes the Bell value for the box. 

The solution to the primal linear program in Eq. (\ref{lin-prog1}) can be bounded by any feasible solution to the dual program which is written as
\begin{eqnarray}
\label{dual-lin-prog1}
&&\min_{ \lambda}: \textit{c}^T \lambda \nonumber \\
&& s.t. \; \; \; \textit{A}^T  \lambda = \textit{M}, \nonumber \\
&&\; \; \; \; \; \; \; \;  \lambda \geq 0.
\end{eqnarray}
Any vector $\lambda^*$ that is feasible in the sense that it satisfies the constraints to the dual program above gives an upper bound $\textit{c}^T \lambda^*$ to the distance from uniform, i.e. 
\begin{equation}
d \left(\textsl{m} | \mathbb{Z}, \mathbb{W} \right) \leq \frac{1}{2} \textit{c}^T \lambda^*,
\end{equation}
provided $\lambda^*$ satisfies the constraints in Eq. (\ref{dual-lin-prog1}). In the Appendix, we explicitly show such a feasible $\lambda^*$ for the measurement setting $\{u^1 u^2 u^3 u^4\} = \{0001\}$ that gives $\textit{c}^T \lambda^* = \left( \frac{11+ 7 \delta}{16}  \right)$. Similar feasible $\lambda^*$ can be found for all the $8$ measurement settings appearing in the Bell inequality which therefore gives 
\begin{equation}
d \left(\textsl{m} | \mathbb{Z}, \mathbb{W} \right) \leq \frac{1}{2} \left( \frac{11+ 7 \delta}{16}  \right).
\end{equation}
This completes the proof.
\end{proof}

Assume that we obtain $m$ independent boxes satisfying Eq. (\ref{goodindividual}). From Lemma \ref{lin-prog}, we know that the predictability of the majority bits obtained from such boxes is bounded. We now show that the XOR of such majority bits leads to a bit that is perfectly random in the limit of large $m$. 

\begin{lemma}
\label{XOR-lemma}
Let $X_i$, $i=1,\ldots,m$ be independent binary random variables with $P(X_i=0)= p_i$
satisfying $\frac12-\ep_i\leq p_i\leq \frac12 +\ep_i$. 
Then  $p=P(X_1 \oplus \ldots X_m=0)$ satisfies $\frac12 - \ep \leq p\leq \frac 12 +\ep $
with $\ep=2^{m-1} \prod_{i=1}^m \ep_i$.   
\end{lemma}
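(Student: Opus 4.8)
The plan is to track the \emph{bias} $b_i := p_i - \tfrac12 = P(X_i = 0) - \tfrac12$, which satisfies $|b_i| \le \ep_i$, and show that the bias of the XOR multiplies. The clean way to do this is to pass to the $\pm 1$ encoding: set $Y_i := (-1)^{X_i}$, so $\E[Y_i] = P(X_i=0) - P(X_i=1) = 2b_i$, and note that $X_1 \oplus \dots \oplus X_m = 0$ exactly when $Y_1 \cdots Y_m = +1$. Since the $X_i$ are independent, so are the $Y_i$, hence
\begin{equation}
\E[Y_1 \cdots Y_m] = \prod_{i=1}^m \E[Y_i] = \prod_{i=1}^m 2 b_i = 2^m \prod_{i=1}^m b_i.
\end{equation}
On the other hand, writing $Z := Y_1 \cdots Y_m \in \{+1,-1\}$ and $p := P(Z = +1) = P(X_1 \oplus \dots \oplus X_m = 0)$, we have $\E[Z] = p - (1-p) = 2p - 1 = 2(p - \tfrac12)$.

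Combining the two expressions gives the key identity
\begin{equation}
p - \tfrac12 \;=\; \tfrac12\,\E[Z] \;=\; 2^{m-1} \prod_{i=1}^m b_i .
\end{equation}
Taking absolute values and using $|b_i| \le \ep_i$ for each $i$ yields
\begin{equation}
\left| p - \tfrac12 \right| \;=\; 2^{m-1} \prod_{i=1}^m |b_i| \;\le\; 2^{m-1} \prod_{i=1}^m \ep_i \;=\; \ep,
\end{equation}
which is exactly the claimed bound $\tfrac12 - \ep \le p \le \tfrac12 + \ep$.

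There is essentially no obstacle here; the only thing to be a little careful about is the bookkeeping between the $\{0,1\}$ and $\{\pm 1\}$ pictures and the factor of $2$ relating $\E[Y_i]$ to the bias $b_i$ (and likewise $\E[Z]$ to $p - \tfrac12$), which together produce the $2^{m-1}$ rather than $2^m$ or $1$. If one prefers to avoid the encoding trick, the same identity can be proved by induction on $m$: the base case $m=1$ is immediate, and for the inductive step one writes $P(X_1 \oplus \dots \oplus X_{m} = 0) = p' p_m + (1-p')(1-p_m)$ where $p' := P(X_1 \oplus \dots \oplus X_{m-1} = 0)$, so that $P(X_1 \oplus \dots \oplus X_m = 0) - \tfrac12 = 2\,(p' - \tfrac12)(p_m - \tfrac12)$, and the induction hypothesis $|p' - \tfrac12| \le 2^{m-2}\prod_{i=1}^{m-1}\ep_i$ closes the argument. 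I would present the $\pm 1$ version as the main line since it is shortest, and perhaps mention the inductive identity as a remark.
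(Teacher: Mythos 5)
Your proof is correct, and it takes a genuinely different route from the paper. The paper proceeds by first treating $m=2$ explicitly, writing $p = p_1 p_2 + (1-p_1)(1-p_2)$, and then arguing that this bilinear function attains its extrema over the rectangle $[\tfrac12-\ep_1,\tfrac12+\ep_1]\times[\tfrac12-\ep_2,\tfrac12+\ep_2]$ at the four corners, checking each corner to get $|p-\tfrac12|\le 2\ep_1\ep_2$, and finally inducting on $m$. Your argument instead passes to the $\pm 1$ encoding $Y_i=(-1)^{X_i}$ and exploits independence to get the \emph{exact} multiplicative identity $p-\tfrac12 = 2^{m-1}\prod_i (p_i-\tfrac12)$ in one stroke, from which the bound follows immediately by taking absolute values. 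The gain is that you never need the corner/extremum argument at all (which the paper uses only to bound something that is in fact computable exactly), and the structure of the statement — bias multiplies under XOR — is made transparent. Your closing remark already observes that the same identity can be extracted from the paper's inductive step; indeed $p - \tfrac12 = 2(p'-\tfrac12)(p_m-\tfrac12)$ is exactly what the paper's $m=2$ computation yields once simplified, so the two proofs are mathematically equivalent, with yours being the streamlined version.
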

\begin{proof} Consider first the case $m=2$ and set  $X=X_1$, $Y=X_2$. We get that 
\ben
&& p\equiv P(X \oplus Y=0)=P(X=0,Y=0)+P(X=1,Y=1)=  \nonumber \\
&&P(X=0)P(Y=0)+P(X=1)P(Y=1) =p_1 p_2 + (1-p_1)(1-p_2)
\een
Note that global extrema of the above function of $p_1$ and $p_2$ are on the boundary. 
Indeed, for any fixed $p_1$, the function is linear in $p_2$, hence optimal $p_2$ are the extremal ones.
In turn, for any of the two extremal values of $p_2$, the function is again linear in $p_1$, so 
that optimal are extremal $p_1$. Thus we have to check four cases: $(p_1, p_2)$ being $(\frac12 \pm \ep_1, \frac12 \pm \ep_2)$.
This gives 
\be
\frac12 -\ep \leq p \leq \frac12 +\ep
\ee
where $\ep=2 \ep_1 \ep_2$.
Thus the formula is valid for $m=2$. For arbitrary $m$, the result follows from induction, by assuming 
that $X_1 \oplus \ldots \oplus X_{m-1}$ satisfies the formula for $m-1$, setting $X=X_1 \oplus \ldots \oplus X_{m-1}$, $Y=X_m$ and applying the result for two random variables. 
\end{proof}

\subsection{Imposing Independence: de Finetti bounds}
\label{deFinetti-section}
In this section, we show that a set of $k$ boxes chosen one from each device using an $\varepsilon$-SV source is close (in trace distance) to being uncorrelated for some suitable choice of block sizes $n_j$. The Lemmas in this section are inspired by the information-theoretic approach of \cite{Brandao} for proving de Finetti theorems for quantum states and non-signaling distributions.


\begin{lemma}
\label{deFinetti-bound2}
Let $P(\textbf{x}_{\leq n_1}^1, \ldots, \textbf{x}_{\leq n_k}^k | \textbf{u}_{\leq n_1}^1, \ldots, \textbf{u}_{\leq n_k}^k)$ be a time-ordered non-signaling distribution, with output and input alphabets $\Sigma$ and $\Lambda$, respectively (i.e. $P : \Sigma^{\times n} \times \Lambda^{\times n} \rightarrow \mathbb{R}^+$). The distribution $P$ represents $k$ devices each with $n_j$ devices. Let $(a_1, \ldots, a_k) \in [n_1] \times \ldots \times [n_k]$ and $(\textbf{u}_{\leq n_1}^1, \ldots, \textbf{u}_{\leq n_k}^k)$ be chosen from a $\varepsilon$-SV source $\nu(a_1, \ldots, a_k, \textbf{u}_{\leq n_1}^1, \ldots, \textbf{u}_{\leq n_k}^k)$. 

Then for every set of positive reals $\{ t_2, \ldots, t_k \}$,
\begin{equation}
\label{deFinetti}
\Pr_{(a_1, \ldots, a_k) \sim \nu} \left( T \geq \sum_{i=2}^{k} \sqrt{\frac{8 \ln{(2)} t_i^2 \sum_{j=1}^{i-1} n_j}{n_i^{1-\log{(1+2\varepsilon)}}}} \right) \leq \sum_{i=2}^{k} \frac{1}{t_i},
\end{equation}
with
\begin{eqnarray} \label{defT}
T := \mathbb{E}_{\textbf{u}_{\leq n_1}^1, \ldots, \textbf{u}_{\leq n_k}^k \sim \nu_{a_1, \ldots, a_k}} &&\mathbb{E}_{\textbf{x}_{< a_1}^1, \ldots \textbf{x}_{< a_k}^k \sim P} \nonumber \\
&& \left \Vert q(\textbf{x}^1_{a_1}, \ldots, \textbf{x}^k_{a_k} | \textbf{u}^1_{a_1}, \ldots, \textbf{u}^k_{a_k}) - q(\textbf{x}^1_{a_1} | \textbf{u}^1_{a_1}) \otimes \ldots \otimes q(\textbf{x}^k_{a_k} | \textbf{u}^k_{a_k}) \right \Vert_1
\end{eqnarray}
where $q$ is the conditional box given the inputs $\textbf{u}_{< a_j}^j$ and outputs $\textbf{x}_{< a_j}^j$ for all $1 \leq j \leq k$
\begin{equation}
q(\textbf{x}^1_{a_1}, \ldots, \textbf{x}^k_{a_k} | \textbf{u}^1_{a_1}, \ldots, \textbf{u}^k_{a_k}) := P_
{\substack{\textbf{x}_{< a_1}^1, \dots, \textbf{x}_{< a_k}^k \\ \textbf{u}_{< a_1}^1, \dots, \textbf{u}_{< a_k}^k}} (\textbf{x}^1_{a_1}, \ldots, \textbf{x}^k_{a_k} | \textbf{u}^1_{a_1}, \ldots, \textbf{u}^k_{a_k}),
\end{equation}
and $\nu_{a_1, \ldots, a_k}$ is the probability $\nu$ conditioned on measuring $(a_1, \ldots, a_k)$.
\end{lemma}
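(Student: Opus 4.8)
\noindent\emph{Proof plan.} The plan is to run an information-theoretic de Finetti argument in the spirit of \cite{Brandao}: peel off the correlations between the $k$ selected sub-boxes one device at a time, bound each peeling step by a conditional mutual information through Pinsker's inequality, collapse the mutual informations over the $n_i$ uses of a single device by the chain rule, and invoke the Santha--Vazirani structure only at the very end through the elementary fact that an index $a_i$ read off $\log n_i$ bits of the source obeys $\nu(a_i\mid\textrm{everything generated before those bits})\le(\tfrac12+\varepsilon)^{\log n_i}=n_i^{-(1-\log(1+2\varepsilon))}$, so that its marginal law $\nu_i$ satisfies $\nu_i(a_i)\le n_i^{-(1-\log(1+2\varepsilon))}$ for every value.

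First I would fix $(a_1,\dots,a_k)$ and write $Q$ for the box $q$ of the lemma, i.e. $P$ conditioned on the earlier uses $\{\textbf{x}^j_{<a_j},\textbf{u}^j_{<a_j}\}_{j=1}^k$ of all devices, viewed as a box on $k$ subsystems (subsystem $j$ being device $j$'s $a_j$-th use). Interpolating between $Q$ and the product $\bigotimes_iQ_i$ of its single-subsystem marginals via the hybrids $H_m:=Q^{[1,m]}\otimes Q_{m+1}\otimes\cdots\otimes Q_k$ (with $Q^{[1,m]}$ the marginal of $Q$ on the first $m$ subsystems, so $H_k=Q$ and $H_1=\bigotimes_iQ_i$), the triangle inequality together with the expectations over inputs and earlier outputs appearing in the definition of $T$ gives $T\le\sum_{i=2}^kT_i$ with $T_i=\mathbb{E}\,\|Q^{[1,i]}-Q^{[1,i-1]}\otimes Q_i\|_1$; here time-ordered non-signaling guarantees that these conditional boxes are well defined, and non-signaling between devices that the two marginals of $Q^{[1,i]}$ entering $T_i$ are exactly $Q^{[1,i-1]}$ and $Q_i$. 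Each $T_i$ is a ``distance to a product of its marginals'', so Pinsker's inequality at each value of the inputs, followed by Jensen to pull the remaining averages inside the root, yields $T_i\le\sqrt{2\ln2\cdot I(\textbf{X}^i_{a_i}:\textbf{X}^{<i}_{a_{<i}}\mid\textbf{X}^1_{<a_1},\dots,\textbf{X}^k_{<a_k},\textbf{U})}$, a conditional mutual information in bits ($\textbf{U}$ denoting all inputs).

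Next I would bound this mutual information. Moving the conditioned-on earlier outputs of devices $1,\dots,i-1$ into the other slot and enlarging it to all of those devices' outputs, one gets $I(\textbf{X}^i_{a_i}:\textbf{X}^{<i}_{a_{<i}}\mid\cdots)\le I\!\big(\textbf{X}^i_{a_i}:\textbf{X}^1_{\le n_1},\dots,\textbf{X}^{i-1}_{\le n_{i-1}}\,\big|\,\textbf{X}^i_{<a_i},\cdots\big)$, and summing over the uses of device $i$ (with all other indices held fixed) the chain rule collapses this to $I\!\big(\textbf{X}^i_{\le n_i}:\textbf{X}^1_{\le n_1},\dots,\textbf{X}^{i-1}_{\le n_{i-1}}\mid\cdots\big)\le H\!\big(\textbf{X}^1_{\le n_1},\dots,\textbf{X}^{i-1}_{\le n_{i-1}}\big)\le(\log_2|\Sigma|)\sum_{j=1}^{i-1}n_j$. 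Averaging instead over $a_i$ drawn from the source, the per-value bound $\nu_i(a_i)\le n_i^{-(1-\log(1+2\varepsilon))}$ converts this uniform-type sum into $\mathbb{E}_{a_i}\big[I(\cdots)\big]\le(\log_2|\Sigma|)\sum_{j<i}n_j\,/\,n_i^{1-\log(1+2\varepsilon)}$, so with Pinsker, Jensen ($\mathbb{E}\sqrt{\cdot}\le\sqrt{\mathbb{E}\cdot}$) and $|\Sigma|=2^4$ we obtain $\mathbb{E}_{(a_1,\dots,a_k)\sim\nu}[T_i]\le\sqrt{8\ln2\sum_{j<i}n_j/n_i^{1-\log(1+2\varepsilon)}}$. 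Markov's inequality then gives $\Pr(T_i\ge t_i\,\mathbb{E}[T_i])\le 1/t_i$, and a union bound over $i=2,\dots,k$ combined with $T\le\sum_{i}T_i$ produces exactly the bound (\ref{deFinetti}).

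I expect the crux to be the bookkeeping of the conditioning in the peeling step: one must arrange that (i) the hybrid chain terminates at the product $\bigotimes_iQ_i$ occurring in the definition of $T$; (ii) when the chain rule is applied across the $n_i$ uses of device $i$, the target register and the surviving conditioning are frozen, not index-dependent; and, most delicately, (iii) that surviving conditioning contains data from devices $j>i$ whose selection indices $a_j$ are generated \emph{after} $a_i$, whereas an $\varepsilon$-SV source controls conditional probabilities only given the past and is not ``backward-SV''. Reconciling (ii)--(iii) — reducing each peeling term to a mutual information that simultaneously telescopes under the chain rule to an entropy of the \emph{earlier} devices' outputs (yielding $\sum_{j<i}n_j$ rather than $\sum_{j\ne i}n_j$) and keeps the conditioning causal with respect to $a_i$ so that $\nu_i(a_i)\le n_i^{-(1-\log(1+2\varepsilon))}$ applies — is the main difficulty, and is where non-signaling between the devices has to be used.
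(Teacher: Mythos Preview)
Your proposal is correct and follows essentially the same information-theoretic de Finetti strategy as the paper: chain rule for mutual information to collapse the $n_i$ uses of device $i$, Pinsker plus Jensen to pass to trace distance, the $\varepsilon$-SV bound $\nu(a_i)\le(\tfrac12+\varepsilon)^{\log n_i}=n_i^{-(1-\log(1+2\varepsilon))}$ to replace the uniform average, and Markov plus a union bound over $i$ to finish. The only cosmetic difference is in the hybrid decomposition: the paper keeps the \emph{full} outputs $\textbf{x}^j_{\le n_j}$ of devices $j<i$ in the intermediate terms and invokes monotonicity of the $1$-norm under discarding subsystems at the end, whereas you condition on those outputs immediately and compensate via the enlargement step $I(A{:}B\mid C,D)\le I(A{:}BC\mid D)$---these are equivalent moves, and your flagged concern (iii) is exactly the bookkeeping that the paper handles (somewhat tersely) by working with the marginal law of $a_i$.
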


\begin{proof}
We first analyze the case when the $k$ boxes $(a_1, \ldots, a_k)$ are chosen from the uniform distribution over $[n_1] \times \ldots \times [n_k]$ and then consider the scenario where they are chosen from an $\varepsilon$-SV source. With $(a_1, \ldots, a_k)$ chosen uniformly, we first show that each of the $j$ boxes is approximately in a product state with the previous $j-1$ boxes. The product form of all boxes will then follow by application of the triangle inequality. 

Using the upper bound on mutual information $I(A:B) \leq \min (\log{|A|}, \log{|B|})$ and the chain rule $I(A: B C) = I(A:B) + I(A:C | B)$, we have
\begin{eqnarray}
\log |\Sigma| \sum_{j=1}^{k-1} n_j &\geq &  \mathbb{E}_{\textbf{u}^1_{\leq n_1}, \ldots, \textbf{u}^k_{\leq n_k} \sim \nu} I(\textbf{x}^1_{\leq n_1}, \ldots, \textbf{x}^{k-1}_{\leq n_{k-1}} : \textbf{x}^k_{\leq n_k})_P \nonumber \\
&=&\mathbb{E}_{\textbf{u}^1_{\leq n_1}, \ldots, \textbf{u}^k_{\leq n_k} \sim \nu} \left( I(\textbf{x}^1_{\leq n_1}, \ldots, \textbf{x}^{k-1}_{\leq n_{k-1}} : \textbf{x}^k_1)_P + \ldots + I(\textbf{x}^1_{\leq n_1}, \ldots, \textbf{x}^{k-1}_{\leq n_{k-1}} : \textbf{x}^k_{n_k} | \textbf{x}^k_{\leq n_k - 1})_P \right) \nonumber \\
\end{eqnarray}
Therefore, when $a_k$ is chosen from the uniform distribution $\mathit{U}[n_k]$, we get 
\begin{equation} \label{mutual-k}
\mathbb{E}_{a_k \sim \mathit{U}[n_k]} \mathbb{E}_{\textbf{u}^1_{\leq n_1}, \ldots, \textbf{u}^k_{\leq n_k} \sim \nu} \mathbb{E}_{\textbf{x}^{k}_{< a_k} \sim P} I(\textbf{x}^1_{\leq n_1}, \ldots, \textbf{x}^{k-1}_{\leq n_{k-1}} : \textbf{x}^{k}_{a_k})_q \leq \frac{\log |\Sigma|  \sum_{j=1}^{k-1} n_j}{n_k}. 
\end{equation}
We now use Pinsker's inequality relating the mutual information and trace distance as 
\begin{equation}
\mathbf{I}(A : B)_p \geq \frac{1}{2 \ln{(2)}} \left \Vert p_{A,B} - p_{A} \otimes p_{B} \right \Vert^2_1,
\end{equation}
and the convexity of $x^2$ to obtain 
\begin{eqnarray}
\label{prod-k}
&&\mathbb{E}_{a_k \sim \mathit{U}[n_k]} \mathbb{E}_{\textbf{u}^1_{\leq n_1}, \ldots, \textbf{u}^k_{\leq n_k} \sim \nu} \mathbb{E}_{\textbf{x}^{k}_{< a_k} \sim P} \nonumber \\
&&\left \Vert q(\textbf{x}^1_{\leq n_1}, \ldots, \textbf{x}^{k-1}_{\leq n_{k-1}}, \textbf{x}^k_{a_k} | \textbf{u}^1_{\leq n_1}, \ldots, \textbf{u}^{k-1}_{\leq n_{k-1}}, \textbf{u}^{k}_{a_k}) - q(\textbf{x}^1_{\leq n_1}, \ldots, \textbf{x}^{k-1}_{\leq n_{k-1}} | \textbf{u}^1_{\leq n_1}, \ldots, \textbf{u}^{k-1}_{\leq n_{k-1}}) \otimes q(\textbf{x}^{k}_{a_k} | \textbf{u}^k_{a_k}) \right \Vert_1 \nonumber \\
&& \hspace{6cm} \leq \sqrt{\frac{2 \ln{(2)} \log |\Sigma| \sum_{j=1}^{k-1} n_j}{n_k}}
\end{eqnarray}
The above argument can also be applied to the box $q(\textbf{x}^1_{\leq n_1}, \ldots, \textbf{x}^{k-2}_{\leq n_{k-2}}, \textbf{x}^{k-1}_{a_{k-1}} | \textbf{u}^1_{\leq n_1}, \ldots, \textbf{u}^{k-2}_{\leq n_{k-2}}, \textbf{u}^{k-1}_{a_{k-1}})$ to give
\begin{eqnarray}
\label{prod-k-1}
&&\mathbb{E}_{a_{k-1}, a_k \sim \mathit{U}([n_{k-1}] \times [n_k])} \mathbb{E}_{\textbf{u}^1_{\leq n_1}, \ldots, \textbf{u}^k_{\leq n_k} \sim \nu} \mathbb{E}_{\textbf{x}^{k-1}_{< a_{k-1}}, \textbf{x}^{k}_{< a_k} \sim P} \nonumber \\
&&\left \Vert q(\textbf{x}^1_{\leq n_1}, \ldots, \textbf{x}^{k-2}_{\leq n_{k-2}}, \textbf{x}^{k-1}_{a_{k-1}} | \textbf{u}^1_{\leq n_1}, \ldots, \textbf{u}^{k-2}_{\leq n_{k-2}}, \textbf{u}^{k-1}_{a_{k-1}}) - q(\textbf{x}^1_{\leq n_1}, \ldots, \textbf{x}^{k-2}_{\leq n_{k-2}} | \textbf{u}^1_{\leq n_1}, \ldots, \textbf{u}^{k-2}_{\leq n_{k-2}}) \otimes q(\textbf{x}^{k-1}_{a_{k-1}} | \textbf{u}^{k-1}_{a_{k-1}}) \right \Vert_1 \nonumber \\
&& \hspace{6cm} \leq \sqrt{\frac{2 \ln{(2) \log |\Sigma|} \sum_{j=1}^{k-2} n_j}{n_{k-1}}}
\end{eqnarray}

Using Eqs. (\ref{prod-k}) and (\ref{prod-k-1}), the triangle inequality and the monotonicity of the $1$-norm under discarding subsystems, we obtain 
\begin{eqnarray}
&&\mathbb{E}_{a_{k-1}, a_k \sim \mathit{U}([n_{k-1}] \times [n_k])} \mathbb{E}_{\textbf{u}^1_{\leq n_1}, \ldots, \textbf{u}^k_{\leq n_k} \sim \nu} \mathbb{E}_{\textbf{x}^{k-1}_{< a_{k-1}}, \textbf{x}^{k}_{< a_k} \sim P} \nonumber \\
&& \Vert q(\textbf{x}^1_{\leq n_1}, .., \textbf{x}^{k-2}_{\leq n_{k-2}}, \textbf{x}^{k-1}_{a_{k-1}}, \textbf{x}^{k}_{a_k} | \textbf{u}^1_{\leq n_1}, .., \textbf{u}^{k-2}_{\leq n_{k-2}}, \textbf{u}^{k-1}_{a_{k-1}}, \textbf{u}^{k}_{a_k}) - \nonumber \\ && \hspace{5cm} q(\textbf{x}^1_{\leq n_1}, .., \textbf{x}^{k-2}_{\leq n_{k-2}} | \textbf{u}^1_{\leq n_1}, .., \textbf{u}^{k-2}_{\leq n_{k-2}}) \otimes q(\textbf{x}^{k-1}_{a_{k-1}} | \textbf{u}^{k-1}_{a_{k-1}} ) \otimes q(\textbf{x}^{k}_{a_{k}} | \textbf{u}^{k}_{a_{k}})  \Vert_1 \nonumber \\
 &&\hspace{6cm} \leq \sqrt{\frac{2 \ln{(2) \log |\Sigma| } \sum_{j=1}^{k-1} n_j}{n_{k}}} + \sqrt{\frac{2 \ln{(2) \log |\Sigma|  } \sum_{j=1}^{k-2} n_j}{n_{k-1}}}
\end{eqnarray}

Following the above reasoning for the $(k-2)^{th}$ box up to the second we find
\begin{eqnarray}
\mathbb{E}_{a_{1}, \dots, a_k \sim \mathit{U}([n_1] \times \dots \times [n_k])} &&\mathbb{E}_{\textbf{u}^1_{\leq n_1}, \ldots, \textbf{u}^k_{\leq n_k} \sim \nu} \mathbb{E}_{\textbf{x}^{1}_{< a_{1}}, \dots, \textbf{x}^{k}_{< a_k} \sim P} \nonumber \\
&&\left \Vert q(\textbf{x}^1_{a_1}, \ldots, \textbf{x}^{k}_{a_k} | \textbf{u}^1_{a_1}, \ldots, \textbf{u}^{k}_{a_k}) - q(\textbf{x}^1_{a_1} | \textbf{u}^1_{a_1}) \otimes \ldots \otimes q(\textbf{x}^{k}_{a_{k}} | \textbf{u}^{k}_{a_{k}}) \right \Vert_1 \nonumber \\
&& \hspace{4cm} \leq \sum_{i=2}^{k} \sqrt{\frac{2 \ln{(2) \log |\Sigma| } \sum_{j=1}^{i-1} n_j}{n_{i}}}
\end{eqnarray}


Let us now show how to extend the argument to the case when $(a_1, \ldots, a_k)$ are chosen from a $\varepsilon$-SV source. Let
\begin{equation}
N_i :=  I(\textbf{x}^1_{\leq n_1}, \ldots, \textbf{x}^{i-1}_{\leq n_{i-1}} : \textbf{x}^i_{a_i})_{q} .
\end{equation}
From the chain-rule argument presented before we have
\begin{eqnarray}
&& \mathbb{E}_{a_i \sim U[n_i]}  \mathbb{E}_{(a_1, \ldots, a_{i-1}, a_{i+1}, \dots, a_k) \sim \nu_{a_i}}   \mathbb{E}_{\textbf{u}^1_{\leq n_1}, \ldots, \textbf{u}^k_{\leq n_k} \sim \nu_{a_1, \ldots, a_k}} \mathbb{E}_{\textbf{x}^{k}_{< a_i, \ldots, < a_k} \sim P} \hspace{0.1 cm} N_i\nonumber \\ &\leq& \frac{ \log |\Sigma| \sum_{j=1}^{i-1} n_j}{n_i}.
\end{eqnarray}
Then from the definition of a $\varepsilon$-SV source,
\begin{eqnarray}
\label{prod-k}
&& \mathbb{E}_{(a_1, \ldots a_k) \sim \nu} \mathbb{E}_{\textbf{u}^1_{\leq n_1}, \ldots, \textbf{u}^k_{\leq n_k} \sim \nu_{a_1, \ldots, a_k}} \mathbb{E}_{\textbf{x}^{k}_{< a_i, \ldots, < a_k} \sim P}  \hspace{0.1 cm}  N_i \nonumber \\ 
&=& \mathbb{E}_{a_j \sim \nu}  \mathbb{E}_{(a_1, \ldots, a_{j-1}, a_{j+1}, \ldots, a_k) \sim \nu_{a_j}}   \mathbb{E}_{\textbf{u}^1_{\leq n_1}, \ldots, \textbf{u}^k_{\leq n_k} \sim \nu_{a_1, \ldots, a_k}} \mathbb{E}_{\textbf{x}^{k}_{< a_i, \ldots, < a_k} \sim P}  \hspace{0.1 cm}  N_i  \nonumber \\ 
&\leq& n_i \left( \frac{1}{2} + \varepsilon \right)^{\log(n_i)}  \mathbb{E}_{a_j \sim U[n_i]}  \mathbb{E}_{(a_1, \ldots, a_{j-1}, a_{j+1}, \dots, a_k) \sim \nu_{a_j}}   \mathbb{E}_{\textbf{u}^1_{\leq n_1}, \ldots, \textbf{u}^k_{\leq n_k} \sim \nu_{a_1, \ldots, a_k}} \mathbb{E}_{\textbf{x}^{k}_{< a_i, \ldots, < a_k} \sim P}  \hspace{0.1 cm} N_i  \nonumber \\ 
&\leq&  n_i \left( \frac{1}{2} + \varepsilon \right)^{\log(n_i)}   \frac{ \log |\Sigma| \sum_{j=1}^{i-1} n_j}{n_i}.
\end{eqnarray}
Then by Pinsker's inequality and the convexity of $x^2$,
\begin{eqnarray}
&&\mathbb{E}_{(a_1, \ldots a_k) \sim \nu} \mathbb{E}_{\textbf{u}^1_{\leq n_1}, \ldots, \textbf{u}^k_{\leq n_k} \sim \nu_{a_1, \ldots, a_k}} \mathbb{E}_{\textbf{x}^{k}_{< a_i, \ldots, < a_k} \sim P}  \nonumber \\
&& \left \Vert q(\textbf{x}^1_{\leq n_1}, \ldots, \textbf{x}^{i-1}_{\leq n_{i-1}}, \textbf{x}^i_{a_i} | \textbf{u}^1_{\leq n_1}, \ldots, \textbf{u}^{i-1}_{\leq n_{i-1}}, \textbf{u}^{i}_{a_i}) - q(\textbf{x}^1_{\leq n_1}, \ldots, \textbf{x}^{i-1}_{\leq n_{i-1}} | \textbf{u}^1_{\leq n_1}, \ldots, \textbf{u}^{i-1}_{\leq n_{i-1}}) \otimes q(\textbf{x}^{i}_{a_i} | \textbf{u}^i_{a_i}) \right \Vert_1  \nonumber \\
&\leq&  \sqrt{2 \ln(2) n_i^{\log(1+2\varepsilon)} \frac{ \log |\Sigma| \sum_{j=1}^{i-1} n_j}{n_i}}.
\end{eqnarray}

By Markov inequality,
\begin{equation}
\label{single-ineq}
\Pr_{(a_1, \ldots, a_k) \sim \nu} \left( T_i \geq \sqrt{\frac{2 \ln{(2) \log |\Sigma|} t_i \sum_{j=1}^{i-1} n_j}{n_i}} \right) \leq \sqrt{\frac{n_i^{\log{(1 + 2\varepsilon)}}}{t_i}},
\end{equation}
with
\begin{eqnarray}
&&T_i :=\mathbb{E}_{\textbf{u}^1_{\leq n_1}, \ldots, \textbf{u}^k_{\leq n_k} \sim \nu_{a_1, \ldots, a_{k}}} \mathbb{E}_{\textbf{x}^{i}_{< a_{i}}, \ldots, \textbf{x}^{k}_{< a_k} \sim P}  \\
&& \left \Vert q(\textbf{x}^1_{\leq n_1}, \ldots, \textbf{x}^{i-1}_{\leq n_{i-1}}, \textbf{x}^{i}_{a_i} | \textbf{u}^1_{\leq n_1}, \ldots, \textbf{u}^{i-1}_{\leq n_{i-1}}, \textbf{u}^i_{a_i}) - q(\textbf{x}^1_{\leq n_1}, \ldots, \textbf{x}^{i-1}_{\leq n_{i-1}} | \textbf{u}^1_{\leq n_1}, \ldots, \textbf{u}^{i-1}_{\leq n_{i-1}}) \otimes q(\textbf{x}^{i}_{a_{i}} | \textbf{u}^{i}_{a_{i}}) \right \Vert_1  \nonumber 
\end{eqnarray}

However, by the triangle inequality and the monotonicity of the $1$-norm under discarding subsystems, we have that if for all $2 \leq i \leq k$, 
\begin{equation}
T_i < \sqrt{\frac{2 \ln{(2) \log|\Sigma|} t_i \sum_{j=1}^{i-1} n_j}{n_i}}  =: r_i ,
\end{equation}
then
\begin{equation}
T < \sum_{i=2}^{k} r_i.
\end{equation}
Therefore,
\begin{eqnarray}
\Pr_{(a_1, \dots, a_k) \sim \nu([n_1] \times \dots \times [n_k])} \left( T < \sum_{i=2}^{k} r_i \right) &\geq & \Pr_{(a_1, \dots, a_k) \sim \eta([n_1] \times \dots \times [n_k])} \left( T_2 < r_2 \cap \dots \cap T_k < r_k \right) \nonumber \\
& = & 1 - \Pr_{(a_1, \dots, a_k) \sim \nu([n_1] \times \dots \times [n_k])} \left( T_2 \geq r_2 \cup \dots \cup T_k \geq r_k \right) \nonumber \\
& \geq & 1 - \sum_{i=2}^{k} \sqrt{\frac{n_i^{\log{(1 + 2\varepsilon)}}}{t_i}},
\end{eqnarray}
where the final inequality follows from the union bound and Eq.(\ref{single-ineq}). This gives
\begin{eqnarray}
\label{k-ineq}
\Pr_{(a_1, \dots, a_k) \sim \nu([n_1] \times \dots \times [n_k])} \left( T \geq \sum_{i=2}^{k} \sqrt{\frac{8 \ln{(2)} t_i \sum_{j=1}^{i-1} n_j}{n_i}} \right) \leq \sum_{i=2}^{k} \sqrt{\frac{n_i^{\log{(1 + 2\varepsilon)}}}{t_i}}.
\end{eqnarray}
Finally, to obtain Eq.(\ref{deFinetti}), we replace $t_i$ by $t_i^2 n_i^{\log{(1+2\varepsilon)}}$ in the equation above 
\end{proof}

\subsection{Protocol verification procedure}
\label{verification}

In this section, we provide the proof that the verification procedure in the protocol works correctly, i.e. when the test accepts, a fraction $\mu$ of the tested boxes have good Bell value. The procedure consists in performing an estimation on the boxes denoted by $(a_1, \dots, a_k)$, chosen with the $\varepsilon$-SV source, and show that when the outcomes from these boxes pass a test, a fraction $m = \mu k$ of them have good Bell value. 

For simplicity, in this section we will use the notation $(1, \dots, k)$ in place of $(a_1, \dots, a_k)$. We say a no-signaling box $P(\textbf{x}_1, \ldots, \textbf{x}_k | \textbf{u}_1, \ldots, \textbf{u}_k)$ and a choice of inputs $(\textbf{u}_1, \ldots, \textbf{u}_k)$ and outputs  $(\textbf{x}_1, \ldots, \textbf{x}_k)$ are $(\mu, \delta)$-good if for all $l \in A$, with $A$ a subset of $\{ 1, \ldots, k \}$ of size larger than $\mu k$,
\begin{equation}
\textbf{B}.\{P_{\substack{\textbf{x}_1, \ldots, \textbf{x}_{l-1}\\ \textbf{u}_1, \ldots, \textbf{u}_{l-1}}}(\textbf{x}_l | \textbf{u}_l)\} < \delta. 
\end{equation}
Here $P_{\substack{\textbf{x}_1, \ldots, \textbf{x}_{l-1} \\ \textbf{u}_1, \ldots, \textbf{u}_{l-1}}}(\textbf{x}_l | \textbf{u}_l)$ is the box conditioned on the inputs $\textbf{u}_1, \ldots, \textbf{u}_{l-1}$ and the outcomes $\textbf{x}_1, \ldots, \textbf{x}_{l-1}$.

Let us consider the test where one computes the empirical violation average of the constraints in $\textbf{B}$:
\begin{equation}
Z_k := \frac{1}{k} \sum_{l=1}^{k} B(\textbf{x}_l, \textbf{u}_l),
\end{equation}
accepts if 
\begin{equation}
Z_k \leq \left( \frac{1}{2} - \varepsilon  \right)^4 (1 - \mu) \frac{\delta}{2},
\end{equation}
and rejects otherwise.

\begin{lemma}
\label{Azuma-estimation}
The test described above rejects with probability larger than $1 - \exp\left( -k \left( \frac{1}{2} - \varepsilon  \right)^8 (1 - \mu)^2 \delta^2/  8 \right)$ unless $((\textbf{u}_1, \ldots, \textbf{u}_k),  (\textbf{x}_1, \ldots, \textbf{x}_k), P(\textbf{x}_1, \ldots, \textbf{x}_k | \textbf{u}_1, \ldots, \textbf{u}_k))$ are $(\mu, \delta)$-good.
\end{lemma}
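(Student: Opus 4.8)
### Proof Proposal for Lemma \ref{Azuma-estimation}

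\textbf{Overview.} The plan is to show the contrapositive: if the box and the inputs/outputs are \emph{not} $(\mu,\delta)$-good, then the empirical average $Z_k$ is large (exceeds the acceptance threshold) except with exponentially small probability. The proof should rest on a martingale concentration argument (Azuma--Hoeffding), controlling the deviation of $Z_k$ from the average of its conditional expectations $\E[B(\textbf{x}_l,\textbf{u}_l) \mid \textbf{x}_1,\ldots,\textbf{x}_{l-1},\textbf{u}_1,\ldots,\textbf{u}_{l-1}]$. The key subtlety is that the inputs $\textbf{u}_l$ are not uniform but drawn from an $\varepsilon$-SV source, so the conditional expectation of $B(\textbf{x}_l,\textbf{u}_l)$ over the SV-distributed settings underestimates the true (uniform-input) Bell value only by a bounded multiplicative factor.

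\textbf{Main steps.} First I would set up the Doob martingale: let $\mathcal{F}_l$ be the sigma-algebra generated by $(\textbf{u}_1,\textbf{x}_1,\ldots,\textbf{u}_l,\textbf{x}_l)$, and define $D_l := B(\textbf{x}_l,\textbf{u}_l) - \E[B(\textbf{x}_l,\textbf{u}_l)\mid \mathcal{F}_{l-1}]$. Since $B(\textbf{x},\textbf{u}) \in \{0,1\}$, the increments are bounded by $1$, so by Azuma's inequality $\Pr\left( \frac{1}{k}\sum_l D_l \leq -\lambda \right) \leq \exp(-k\lambda^2/2)$. Second, I would lower bound the conditional expectation $\E[B(\textbf{x}_l,\textbf{u}_l)\mid \mathcal{F}_{l-1}]$. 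The settings $\textbf{u}_l$ consist of four bits from the $\varepsilon$-SV source, so each of the $8$ relevant settings in $\textbf{U}_0 \cup \textbf{U}_1$ has conditional probability at least $\left(\frac{1}{2}-\varepsilon\right)^4$; hence $\E[B(\textbf{x}_l,\textbf{u}_l)\mid \mathcal{F}_{l-1}] \geq \left(\frac{1}{2}-\varepsilon\right)^4 \, \textbf{B}.\{P_{\textbf{x}_{<l},\textbf{u}_{<l}}(\textbf{x}_l\mid\textbf{u}_l)\}$, the right-hand side being the conditional Bell value with \emph{uniform} settings (up to the normalization of the inequality — I should be careful whether $\textbf{B}$ is normalized as a sum over the $8$ settings or averaged; the factor $\left(\frac12-\varepsilon\right)^4$ indicates it is set up so this clean bound holds). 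Third, suppose the box is not $(\mu,\delta)$-good; then more than $(1-\mu)k$ of the conditional Bell values exceed $\delta$, so $\frac{1}{k}\sum_l \E[B(\textbf{x}_l,\textbf{u}_l)\mid\mathcal{F}_{l-1}] > \left(\frac12-\varepsilon\right)^4 (1-\mu)\delta$. Combining with the Azuma bound applied at $\lambda = \left(\frac12-\varepsilon\right)^4 (1-\mu)\delta/2$ gives that $Z_k > \left(\frac12-\varepsilon\right)^4(1-\mu)\frac{\delta}{2}$ except with probability at most $\exp\!\left( -k \left(\frac12-\varepsilon\right)^8 (1-\mu)^2\delta^2/8 \right)$, i.e.\ the test rejects with the claimed probability.

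\textbf{Main obstacle.} I expect the principal difficulty to be the second step — correctly relating $\E[B(\textbf{x}_l,\textbf{u}_l)\mid\mathcal{F}_{l-1}]$ under the SV source to the conditional Bell value $\textbf{B}.\{P_{\textbf{x}_{<l},\textbf{u}_{<l}}(\textbf{x}_l\mid\textbf{u}_l)\}$ that appears in the definition of $(\mu,\delta)$-good. One has to be careful that the SV source conditioning is on the \emph{history} of all bits produced (including the bits used to pick $a_j$, the discarded settings, and the earlier boxes), and to check that the "no-signaling between devices $+$ Markov with the adversary" assumptions let us treat $\textbf{u}_l$ as a fresh draw from the SV source with conditional bias at most $\varepsilon$ regardless of $\textbf{x}_{<l},\textbf{u}_{<l}$. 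The inequality $p(\text{any fixed 4-bit string})\ge(\tfrac12-\varepsilon)^4$ follows by applying \eqref{SVcond} four times along the chain rule, but one must ensure the conditioning variables are consistent with the SV ordering. Once that lower bound is in place, the rest is a routine application of Azuma's inequality with bounded increments.
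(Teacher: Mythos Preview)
Your proposal is correct and follows essentially the same route as the paper: assume the triple is not $(\mu,\delta)$-good, use the SV lower bound $\nu(\textbf{u}_l\mid\text{history})\ge(\tfrac12-\varepsilon)^4$ to get $\E[B(\textbf{x}_l,\textbf{u}_l)\mid\mathcal{F}_{l-1}]\ge(\tfrac12-\varepsilon)^4\delta_l$, deduce $\tfrac1k\sum_l(\tfrac12-\varepsilon)^4\delta_l\ge(\tfrac12-\varepsilon)^4(1-\mu)\delta$, and apply Azuma--Hoeffding with unit bounded differences at deviation $s=(\tfrac12-\varepsilon)^4(1-\mu)\delta/2$. The only cosmetic difference is that the paper packages the argument as a supermartingale $X_l=\sum_{i\le l}(\zeta_i-B(\textbf{x}_i,\textbf{u}_i))$ with predictable compensator $\zeta_l:=(\tfrac12-\varepsilon)^4\delta_l$, whereas you form the martingale difference $D_l=B_l-\E[B_l\mid\mathcal{F}_{l-1}]$ and then lower-bound the conditional expectation by $\zeta_l$; these are equivalent formulations yielding the identical exponent.
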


\begin{proof}
Assume that $((\textbf{u}_1, \ldots, \textbf{u}_k),  (\textbf{x}_1, \ldots, \textbf{x}_k), P(\textbf{x}_1, \ldots, \textbf{x}_k | \textbf{u}_1, \ldots, \textbf{u}_k))$  are not $(\mu, \delta)$-good. Let us show the test rejects with probability greater or equal to $1 - \exp\left( -k \left( \frac{1}{2} - \varepsilon  \right)^8 (1 - \mu)^2 \delta^2/  8 \right)$. 

Define
\begin{equation}
\delta_l := \textbf{B}.\{P_{\substack{\textbf{x}_1, \ldots, \textbf{x}_{l-1} \\ \textbf{u}_1, \ldots, \textbf{u}_{l-1}}}(\textbf{x}_l | \textbf{u}_l) \}.
\end{equation}
Then by the definition of a $\varepsilon$-SV source we have that 
\begin{equation} \label{SVeffect}
\mathbb{E}_{\textbf{u}_l \sim \nu} \mathbb{E}_{\textbf{x}_l \sim P_{\substack{\textbf{x}_1, \ldots, \textbf{x}_{l-1}\\\textbf{u}_1, \ldots, \textbf{u}_{l-1}}}(\textbf{x}_l | \textbf{u}_l)} B(\textbf{x}_l, \textbf{u}_l) \geq \left( \frac{1}{2}  - \varepsilon \right)^4 \delta_l =: \zeta_l. 
\end{equation}

Let us define
\begin{equation}
X_l := \sum_{i=1}^l (\zeta_i - B(\textbf{x}_i, \textbf{u}_i)). 
\end{equation}

We claim $\{ X_1, \ldots, X_k \}$ form a supermartingale with respect to $\{(\textbf{x}_1, \textbf{u}_1), \ldots, (\textbf{x}_k, \textbf{u}_k)\}$. Indeed we have
\begin{equation}
\mathbb{E}(X_{l} |  (\textbf{x}_{1}, \textbf{u}_1), \ldots, (\textbf{x}_{l-1} , \textbf{u}_{l-1})) = X_{l-1} + \zeta_{l} - \mathbb{E}(B(\textbf{x}_l, \textbf{u}_l) |  (\textbf{x}_{1}, \textbf{u}_1), \ldots, (\textbf{x}_{l-1} , \textbf{u}_{l-1})) \leq X_{l-1}
\end{equation}
where the last inequality follows from Eq. (\ref{SVeffect}) as follows:
\begin{eqnarray}
\label{supermartingale}
&&\mathbb{E}(B(\textbf{x}_{l}, \textbf{u}_{l}) | (\textbf{x}_{1}, \textbf{u}_1), \ldots, (\textbf{x}_{l-1} , \textbf{u}_{l-1})) \nonumber \\ &&\quad = \sum_{\textbf{u}_l, \ldots, \textbf{u}_k} \sum_{\textbf{x}_{l} , \ldots, \textbf{x}_{k}} \nu(\textbf{u}_{l}, \ldots, \textbf{u}_{k} | (\textbf{x}_1, \textbf{u}_1), \ldots, (\textbf{x}_{l-1}, \textbf{u}_{l-1})) P_{\substack{\textbf{x}_1, \ldots, \textbf{x}_{l-1}\\\textbf{u}_1, \ldots, \textbf{u}_{l-1}}} (\textbf{x}_{l}, \ldots, \textbf{x}_{k} | \textbf{u}_1, \ldots, \textbf{u}_k) B(\textbf{x}_{l}, \textbf{u}_l) \nonumber \\ &&\quad = \sum_{\textbf{u}_{l}} \sum_{\textbf{x}_{l}} \nu(\textbf{u}_{l} | (\textbf{x}_1, \textbf{u}_1), \ldots, (\textbf{x}_{l-1}, \textbf{u}_{l-1})) P_{\substack{\textbf{x}_1, \ldots, \textbf{x}_{l-1}\\ \textbf{u}_1, \ldots, \textbf{u}_{l-1}}}(\textbf{x}_l | \textbf{u}_l) B(\textbf{x}_l, \textbf{u}_l) \nonumber \\ && \quad \geq \left( \frac{1}{2} - \varepsilon \right)^4 \sum_{\textbf{u}_{l}} \sum_{\textbf{x}_{l}} P_{\substack{\textbf{x}_{1}, \ldots, \textbf{x}_{l-1}\\ \textbf{u}_1, \ldots, \textbf{u}_{l-1}}} (\textbf{x}_l | \textbf{u}_l) B(\textbf{x}_l, \textbf{u}_l) = \left( \frac{1}{2} - \varepsilon \right)^4 \delta_l = \zeta_l.
\end{eqnarray}

Moreover $|X_l - X_{l-1}| \leq 1$ since $B(\textbf{x}_i, \textbf{u}_i) \in \{0, 1\}$ and $0 \leq \delta_l \leq 8$ (the maximum value of the Bell expression by normalization) giving $0\leq \zeta_l \leq \frac{1}{2}$. Thus by Azuma-Hoeffding inequality (Lemma \ref{azuma}) taking $X_0 = 0$ we find
\begin{equation}
\Pr \left(  X_k \geq t  \right) \leq e^{-t^2 / 2k}.
\end{equation}
Rearranging terms and defining $s := t/k$,
\begin{equation}
\Pr \left(  Z_k \leq \frac{1}{k}  \sum_{i=1}^{k} \zeta_i - s   \right) \leq e^{- s^2 k / 2}.
\end{equation}

Since $((\textbf{u}_1, \ldots, \textbf{u}_k),  (\textbf{x}_1, \ldots, \textbf{x}_k), P(\textbf{x}_1, \ldots, \textbf{x}_k | \textbf{u}_1, \ldots, \textbf{u}_k))$ are not $(\mu, \delta)$-good,
\begin{equation}
\sum_{i=1}^{k} \zeta_i  = \left( \frac{1}{2}  - \varepsilon \right)^4 \sum_{i=1}^{k} \delta_i \geq  \left( \frac{1}{2}  - \varepsilon \right)^4 (1 - \mu) k \delta,
\end{equation}
and so 
\begin{equation}
\Pr \left(  Z_k \leq \left( \frac{1}{2}  - \varepsilon \right)^4 (1 - \mu) \delta - s \right) \leq e^{- s^2 k / 2}.
\end{equation}
Taking $s = \left( \frac{1}{2} - \varepsilon \right)^4 (1 - \mu) \frac{\delta}{2}$ gives the result. 
\end{proof}

In the proof above, we used the notion of supermartingales and the associated Azuma-Hoeffding inequality which we recount here for convenience. Let $X_0, \ldots, X_k$ and $Y_0, \ldots, Y_k$ be two sequences of random variables. Then $X_0, \ldots, X_k$ is said to be a supermartingale with respect to $Y_0, \ldots, Y_k$ if for all $0 \leq i \leq k$, $\mathbb{E}|X_{i}| < \infty$ and $\mathbb{E}(X_{i} | Y_0, \ldots, Y_{i-1}) \leq X_{i-1}$. 

\begin{lemma} \label{azuma}
(Azuma-Hoeffding) Suppose $X_0, \ldots, X_k$ is a supermartingale with respect to $Y_0, \ldots, Y_k$, and that $|X_{l+1} - X_l| \leq c_l$ for all $0 \leq l \leq k-1$. Then for all positive reals $t$, 
\begin{equation}
\Pr \left( X_k - X_0 \geq t  \right) \leq \exp \left( - \frac{t^2}{2 \sum_{l=1}^{k} c_l^2}  \right).\\
\end{equation}
\end{lemma}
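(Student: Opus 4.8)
The plan is to run the standard exponential-moment (Chernoff) argument, adapted to the supermartingale setting. Re-indexing the step bound as $|X_l - X_{l-1}| \le c_l$ for $1 \le l \le k$, I set $D_l := X_l - X_{l-1}$, so that $X_k - X_0 = \sum_{l=1}^k D_l$, with $|D_l| \le c_l$, and, by the supermartingale property, $\mathbb{E}(D_l \mid Y_0,\ldots,Y_{l-1}) = \mathbb{E}(X_l \mid Y_0,\ldots,Y_{l-1}) - X_{l-1} \le 0$; here, as is standard and as holds in the application to Lemma \ref{Azuma-estimation}, I take each $X_l$ to be a function of $Y_0,\ldots,Y_l$. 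For any $\lambda > 0$, Markov's inequality applied to $e^{\lambda(X_k - X_0)}$ gives $\Pr(X_k - X_0 \ge t) \le e^{-\lambda t}\, \mathbb{E}\big[e^{\lambda \sum_{l=1}^k D_l}\big]$.

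The crux is a conditional version of Hoeffding's lemma: for a random variable $Z$ with $|Z| \le c$ and conditional mean at most $0$, the conditional exponential moment is bounded by $e^{\lambda^2 c^2/2}$. This follows from convexity of $x \mapsto e^{\lambda x}$ on $[-c,c]$: the graph lies below the chord through $(-c, e^{-\lambda c})$ and $(c, e^{\lambda c})$, i.e. $e^{\lambda z} \le \frac{c-z}{2c}\, e^{-\lambda c} + \frac{c+z}{2c}\, e^{\lambda c}$; taking the conditional expectation, the right-hand side becomes an affine, increasing function of the conditional mean of $Z$, which is $\le 0$, so it is bounded by its value at $0$, namely $\cosh(\lambda c) \le e^{\lambda^2 c^2/2}$ (by comparison of Taylor series).

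With this in hand I would peel the increments off one at a time. Conditioning on $Y_0,\ldots,Y_{k-1}$, the factor $e^{\lambda \sum_{l<k} D_l}$ is measurable and factors out, while $\mathbb{E}(e^{\lambda D_k} \mid Y_0,\ldots,Y_{k-1}) \le e^{\lambda^2 c_k^2/2}$ by the conditional Hoeffding bound; iterating down to $D_1$ gives $\mathbb{E}[e^{\lambda(X_k - X_0)}] \le \exp\!\big(\tfrac{\lambda^2}{2} \sum_{l=1}^k c_l^2\big)$. Substituting into the Markov bound yields $\Pr(X_k - X_0 \ge t) \le \exp\!\big(-\lambda t + \tfrac{\lambda^2}{2}\sum_{l=1}^k c_l^2\big)$ for every $\lambda > 0$; the choice $\lambda = t / \sum_{l=1}^k c_l^2$ minimizes the exponent and gives $\exp\!\big(-t^2/(2\sum_{l=1}^k c_l^2)\big)$, as claimed.

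I expect the main obstacle to be the conditional Hoeffding bound together with making the iterated conditioning rigorous — in particular, observing that one only needs the supermartingale inequality ($\le 0$) rather than a martingale identity, which works precisely because $\lambda > 0$ makes the relevant affine function monotone in the conditional mean. The remaining pieces — the reduction to increments, Markov's inequality, the telescoping, and the one-variable optimization over $\lambda$ — are routine bookkeeping.
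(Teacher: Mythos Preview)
The paper does not actually prove this lemma: it is stated without proof as a well-known result, recalled ``for convenience'' after the proof of Lemma~\ref{Azuma-estimation}. Your argument is the standard Chernoff/exponential-moment proof of Azuma--Hoeffding and is correct, including the adaptation to the supermartingale (rather than martingale) case via the monotonicity of the affine bound in the conditional mean when $\lambda>0$.
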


\subsection{Robustness of the Protocol} 

In the remainder of this section, we would like to estimate the amount of noise that the protocol can tolerate. Suppose we are given a box such that for every inputs and outputs, $((\textbf{u}_1, \ldots, \textbf{u}_k), (\textbf{x}_1, \ldots, \textbf{x}_k), P(\textbf{x}_1, \ldots, \textbf{x}_k | \textbf{u}_1, \ldots, \textbf{u}_k))$ are $(1, \tilde{\delta})$-good. This will be the case, for example, if all the the entangled states and measurements used to produce a box are only $O(\tilde{\delta})$-close to the ones that would lead to a box violating maximally the Bell inequality (i.e. $k$ copies of the entangled state given by Eq. (\ref{state}), each measured in the bases given by Eqs. (\ref{mea1}) and (\ref{mea2}).


\begin{lemma} \label{robustnesssection}
\label{prot-1-quantum}
Consider the verification procedure applied a triple $((\textbf{u}_1, \ldots, \textbf{u}_k), (\textbf{x}_1, \ldots, \textbf{x}_k), P(\textbf{x}_1, \ldots, \textbf{x}_k | \textbf{u}_1, \ldots, \textbf{u}_k))$ which is $(1, \tilde{\delta})$-good. Then the test accepts with probability $1 - \exp{\left(-\frac{k (1/2 - \varepsilon)^8 (1 - \mu)^2 \delta^2}{2048}\right)}$ as long as 
\begin{equation}
\tilde{\delta} \leq \frac{(1-\mu) \delta (\frac{1}{2} - \varepsilon)^4 f(\varepsilon)}{4 (\frac{1}{2} + \varepsilon)^4},
\end{equation}
where $f(\varepsilon) := \left(\frac{1}{2} + \varepsilon \right)^4 + \left( \frac{1}{2} - \varepsilon \right)^4 + 4 \left( \frac{1}{2} - \varepsilon \right)^3 \left(\frac{1}{2} + \varepsilon \right) + 2 \left( \frac{1}{2} - \varepsilon \right)^2 \left( \frac{1}{2} + \varepsilon \right)^2$.
\end{lemma}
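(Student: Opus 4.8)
The plan is to relate the accept threshold of the test (which is phrased in terms of the empirical average $Z_k$ against the \emph{SV-weighted} expectation) to the worst-case Bell value $\tilde\delta$ of the $(1,\tilde\delta)$-good box, and then invoke an Azuma-Hoeffding concentration argument analogous to Lemma \ref{Azuma-estimation}. Concretely, write $\delta_l := \textbf{B}.\{P_{\textbf{x}_{<l},\textbf{u}_{<l}}(\textbf{x}_l | \textbf{u}_l)\}$ as before; since the triple is $(1,\tilde\delta)$-good we have $\delta_l \le \tilde\delta$ for every $l$ and every history. First I would get an \emph{upper} bound on the SV-weighted conditional expectation of $B(\textbf{x}_l,\textbf{u}_l)$. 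Here, unlike in Lemma \ref{Azuma-estimation} where one only needed the lower bound $(\tfrac12-\varepsilon)^4\delta_l$, one needs an upper bound: the SV source can bias the inputs so that the settings appearing in $\textbf{B}$ are weighted as heavily as $(\tfrac12+\varepsilon)$ per party for the four ``relevant'' coordinates, but one must also track that $B(\textbf{x}_l,\textbf{u}_l)=0$ whenever $\textbf{u}_l\notin\textbf{U}_0\cup\textbf{U}_1$. Splitting the sum over $\textbf{u}_l$ into the eight relevant settings and bounding each input-probability of a relevant setting by a product of SV weights gives
\be
\mathbb{E}_{\textbf{u}_l\sim\nu}\,\mathbb{E}_{\textbf{x}_l}\,B(\textbf{x}_l,\textbf{u}_l) \ \le\ \Big(\tfrac12+\varepsilon\Big)^{4}\,\delta_l \ \le\ \Big(\tfrac12+\varepsilon\Big)^{4}\,\tilde\delta .
\ee
Actually the sharper bound uses the function $f(\varepsilon)$: the four coordinates of a relevant setting are not all equal, and carefully accounting for how many $0$'s and $1$'s appear in the strings of $\textbf{U}_0$ and $\textbf{U}_1$ yields the combinatorial factor $f(\varepsilon) = (\tfrac12+\varepsilon)^4 + (\tfrac12-\varepsilon)^4 + 4(\tfrac12-\varepsilon)^3(\tfrac12+\varepsilon) + 2(\tfrac12-\varepsilon)^2(\tfrac12+\varepsilon)^2$ in place of the crude $(\tfrac12+\varepsilon)^4$; I would carry out exactly this bookkeeping.

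Next I would set up the supermartingale. With $\eta_l$ an upper bound on the conditional SV-expectation of $B(\textbf{x}_l,\textbf{u}_l)$ as above, define $X_l := \sum_{i=1}^l \big(B(\textbf{x}_i,\textbf{u}_i) - \eta_i\big)$; then $\mathbb{E}(X_l\mid (\textbf{x}_1,\textbf{u}_1),\dots,(\textbf{x}_{l-1},\textbf{u}_{l-1})) \le X_{l-1}$, so $\{X_l\}$ is a supermartingale with increments bounded by $1$ (since $B\in\{0,1\}$ and $0\le\eta_l\le 1$). Applying Lemma \ref{azuma} to $-X_k$ — i.e.\ a lower-tail bound on $Z_k = \tfrac1k\sum_i B(\textbf{x}_i,\textbf{u}_i)$ is obtained as an upper-tail bound on $\tfrac1k\sum_i(\eta_i - B(\textbf{x}_i,\textbf{u}_i))$ — gives, for $s>0$,
\be
\Pr\!\left(Z_k \ge \frac{1}{k}\sum_{i=1}^k \eta_i + s\right) \le e^{-s^2 k/2}.
\ee
Since $\eta_i \le \tilde\delta\,(\tfrac12+\varepsilon)^4$ (or the $f(\varepsilon)$-sharpened version), whenever $\tilde\delta$ satisfies the stated hypothesis the quantity $\tfrac1k\sum_i\eta_i$ is at most, say, half of the accept threshold $\big(\tfrac12-\varepsilon\big)^4(1-\mu)\tfrac{\delta}{2}$; choosing $s$ equal to the remaining half of the threshold makes the right-hand side $e^{-s^2 k/2}$ equal to the claimed $\exp(-k(1/2-\varepsilon)^8(1-\mu)^2\delta^2/2048)$ (the $2048$ being $2\cdot 32^2$, consistent with halving the threshold $(\tfrac12-\varepsilon)^4(1-\mu)\delta/4$ and squaring). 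That shows the test accepts — i.e.\ $Z_k$ is below threshold — with the claimed probability.

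The main obstacle I expect is the direction of the inequality in the first step: Lemma \ref{Azuma-estimation} only needed a lower bound on the SV-conditional expectation of $B$, but robustness needs an upper bound, and an upper bound is subtler because the SV source is adversarial and can concentrate its mass on precisely the inputs that maximize $B(\textbf{x}_l,\textbf{u}_l)$ given the (bad-for-us) conditional box. Getting the constant right — i.e.\ that the relevant weight is controlled by $f(\varepsilon)$ rather than merely $(\tfrac12+\varepsilon)^4$, which matters for the final noise rate — requires carefully enumerating the Hamming-weight profiles of the strings in $\textbf{U}_0\cup\textbf{U}_1$ and noting that, conditioned on a relevant setting being chosen, the four per-party SV weights multiply to something at most $f(\varepsilon)/(\text{normalization})$. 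Everything after that is the same supermartingale/Azuma machinery already used in Lemma \ref{Azuma-estimation}, with the sign flipped and $s$ and the threshold re-tuned.
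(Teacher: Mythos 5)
Your high-level strategy is exactly the paper's: flip the supermartingale of Lemma \ref{Azuma-estimation} to get an upper tail bound on $Z_k$, show that the $(1,\tilde\delta)$-goodness together with the hypothesis on $\tilde\delta$ forces $\tfrac1k\sum_i\eta_i$ below half the accept threshold, and take $s$ equal to the other half. But the key quantitative step — the bound on the SV-weighted conditional expectation of $B$ — is wrong in a way that would break the threshold comparison. You write $\eta_l\le(\tfrac12+\varepsilon)^4\delta_l$, ``or the $f(\varepsilon)$-sharpened version'' with $f(\varepsilon)$ replacing $(\tfrac12+\varepsilon)^4$. Neither is what is needed. The point you are missing is that in Step~1 the protocol discards all rounds whose input falls outside $\textbf{U}_0\cup\textbf{U}_1$, so the inputs fed into the Bell test are SV-inputs \emph{conditioned on acceptance}. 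The relevant quantity is therefore $\max_{\textbf{u}}\nu(\textbf{u}\,|\,\textbf{u}\in\textbf{U}_0\cup\textbf{U}_1)\le (\tfrac12+\varepsilon)^4/\nu(\text{acc})$, and the role of $f(\varepsilon)$ is precisely to lower-bound $\nu(\text{acc})$ — it sits in the \emph{denominator}. So the correct bound is $\eta_l\le \frac{(1/2+\varepsilon)^4}{f(\varepsilon)}\,\delta_l$, and this is \emph{larger} than your ``crude'' $(\tfrac12+\varepsilon)^4\delta_l$ (note $f(\varepsilon)\approx\tfrac12<1$, so $(\tfrac12+\varepsilon)^4/f(\varepsilon)\approx\tfrac18$ at $\varepsilon=0$). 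Your reading that $f(\varepsilon)$ replaces $(\tfrac12+\varepsilon)^4$ makes the bound $\eta_l\le f(\varepsilon)\tilde\delta$; plugging in the hypothesis on $\tilde\delta$ then gives $\eta_l\le \frac{(1-\mu)\delta(1/2-\varepsilon)^4}{4}\cdot\frac{f(\varepsilon)^2}{(1/2+\varepsilon)^4}$, which exceeds $\frac{(1-\mu)\delta(1/2-\varepsilon)^4}{4}$ whenever $f(\varepsilon)>(\tfrac12+\varepsilon)^2$ (true for small $\varepsilon$), so the ``half the threshold'' inequality you rely on fails. With the correct $\eta_l\le\frac{(1/2+\varepsilon)^4}{f(\varepsilon)}\tilde\delta$ the factors cancel exactly and you land on $\frac{(1-\mu)\delta(1/2-\varepsilon)^4}{4}$, as the lemma's hypothesis is engineered to give.

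A secondary issue: you take the supermartingale increments to be bounded by $1$ and hence use $e^{-s^2k/2}$; the paper bounds $|\tilde X_l-\tilde X_{l-1}|\le 8$ (since $\delta_l$ can be as large as $8$ and hence $\tilde\zeta_l$ need not be $\le 1$), giving $e^{-s^2 k/128}$, and with $s=(\tfrac12-\varepsilon)^4(1-\mu)\delta/4$ this yields the stated denominator $16\cdot128=2048$. Your reconciliation ``$2048=2\cdot32^2$, consistent with halving the threshold'' is not a valid Azuma computation: with $c_l=1$ and that $s$ you would get $\exp(-k(\tfrac12-\varepsilon)^8(1-\mu)^2\delta^2/32)$, not the claimed $2048$. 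So either use the conservative increment bound $c_l=8$ as the paper does, or prove $\tilde\zeta_l\le 1$ explicitly and accept that you would be proving a (stronger) bound than the one stated.
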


\begin{proof}
The proof follows similarly to that of Lemma \ref{Azuma-estimation}. 
We have that for all $1 \leq l \leq k$,
\begin{equation}
\delta_l := \textbf{B}.\{P_{\substack{\textbf{x}_1, \ldots, \textbf{x}_{l-1} \\ \textbf{u}_1, \ldots, \textbf{u}_{l-1}}}(\textbf{x}_l | \textbf{u}_l) \} \leq \tilde{\delta}.
\end{equation}
By the definition of the $\varepsilon$-SV source
\begin{equation} \label{SVeffect}
\mathbb{E}_{\textbf{u}_l \sim \nu} \mathbb{E}_{\textbf{x}_l \sim P_{\substack{\textbf{x}_1, \ldots, \textbf{x}_{l-1}\\\textbf{u}_1, \ldots, \textbf{u}_{l-1}}}(\textbf{x}_l | \textbf{u}_l)} B(\textbf{x}_l, \textbf{u}_l) \leq \frac{\left( \frac{1}{2}  + \varepsilon \right)^4}{f(\varepsilon)} \delta_l =: \tilde{\zeta}_l, 
\end{equation}
with the norm defined by $f(\varepsilon) := \left(\frac{1}{2} + \varepsilon \right)^4 + \left( \frac{1}{2} - \varepsilon \right)^4 + 4 \left( \frac{1}{2} - \varepsilon \right)^3 \left(\frac{1}{2} + \varepsilon \right) + 2 \left( \frac{1}{2} - \varepsilon \right)^2 \left( \frac{1}{2} + \varepsilon \right)^2$. This follows since $\frac{\left( \frac{1}{2}  + \varepsilon \right)^4}{f(\varepsilon)}$ is the maximum probability of any set of four measurements.

Defining 
\begin{equation}
\tilde{X}_l := \sum_{i=1}^l (B(\textbf{x}_i, \textbf{u}_i) - \tilde{\zeta}_i), 
\end{equation}
and following Eq. (\ref{supermartingale}), we find
\begin{equation}
\mathbb{E}(\tilde{X}_{l} |  (\textbf{x}_{1}, \textbf{u}_1), \ldots, (\textbf{x}_{l-1} , \textbf{u}_{l-1})) = \tilde{X}_{l-1} + \mathbb{E}(B(\textbf{x}_l, \textbf{u}_l) |  (\textbf{x}_{1}, \textbf{u}_1), \ldots, (\textbf{x}_{l-1} , \textbf{u}_{l-1})) - \tilde{\zeta}_{l} \leq \tilde{X}_{l-1}.
\end{equation}
In other words, $\{ \tilde{X}_1, \ldots, \tilde{X}_k \}$ form a supermartingale with respect to $\{(\textbf{x}_1, \textbf{u}_1), \ldots, (\textbf{x}_k, \textbf{u}_k)\}$.

Since $|\tilde{X}_l - \tilde{X}_{l-1}| \leq 8$ by Lemma \ref{azuma} (with $\tilde{X}_0 = 0$),
\begin{equation}
\Pr \left( Z_k \geq \frac{1}{k}  \sum_{i=1}^{k} \tilde{\zeta}_i + s   \right) \leq e^{- s^2 k /128}.
\end{equation}
When $((\textbf{u}_1, \ldots, \textbf{u}_k), (\textbf{x}_1, \ldots, \textbf{x}_k), P(\textbf{x}_1, \ldots, \textbf{x}_k | \textbf{u}_1, \ldots, \textbf{u}_k))$ are $(1, \tilde{\delta})$-good, we know that 
\begin{equation}
\sum_{i=1}^{k} \tilde{\zeta}_i = \frac{(\frac{1}{2} + \varepsilon)^4}{ f(\varepsilon) } \sum_{i=1}^{k} \delta_i =  \frac{(\frac{1}{2} + \varepsilon)^4}{f(\varepsilon)} k \tilde{\delta}
\end{equation}
Choosing 
$s =\frac{ (\frac{1}{2} - \varepsilon)^4 (1 - \mu) \delta}{4},$
we find that when 
\begin{equation}
\tilde{\delta} \leq \frac{(1-\mu) \delta (\frac{1}{2} - \varepsilon)^4 f(\varepsilon) }{4 (\frac{1}{2} + \varepsilon)^4},
\end{equation}
we have
\begin{equation}
\Pr \left( Z_k \leq \frac{(1- \mu) \delta (\frac{1}{2} - \varepsilon)^4}{2}  \right) \geq 1 - e^{\left(- \frac{k (\frac{1}{2} - \varepsilon)^8 (1- \mu)^2 \delta^2} {2^{11}}\right)},
\end{equation}
so that the test is passed with high probability.
\end{proof}

\section{Conclusion and Open Questions} 

We have presented a protocol for obtaining secure random bits from an arbitrarily (but not fully) deterministic $\varepsilon$-SV source. The protocol uses correlations violating a four-party Bell inequality, includes an explicit hash function, and works even with correlations attainable by noisy quantum mechanical resources, producing a bit $\varepsilon'$-close to random with $\textit{O}(\log{(1/\varepsilon')})$ devices. Moreover the correctness of the protocol is not based on quantum mechanics and only requires the no-signalling principle.

We leave the following open questions to future research:

\begin{itemize}

\item Is there a randomness amplification protocol secure against no-signaling adversaries using only a finite number of devices? While such protocols may be formulated assuming a set of independent boxes (i.e. an ``individual attack'' by the eavesdropper), a proof for general coherent attacks is lacking. 

\item Can randomness amplification be based on a bipartite Bell inequality, i.e., are there bipartite Bell inequalities which allow algebraic violation by quantum correlations in addition to incorporating randomness?

\item Is there a protocol that can tolerate a higher level of noise? What if we assume the validity of quantum mechanics?

\item Can we amplify randomness from other different types of sources? A particularly interesting case is the min-entropy source \cite{Scarani}. 

\item A more technical question is to improve the de Finetti theorem given in \cite{Brandao, Brandao2}. What are the limits of de Finetti type results when the subsystems are selected from a Santha-Vazirani source?

\item Finally suppose one would like to realize device-independent quantum key distribution with only an imperfect SV source as the randomness source. Is there an efficient protocol for that tolerating a constant rate of noise and giving a constant rate of key? Here the question is open for both quantum-mechanical and non-signalling adversaries. 

\end{itemize}

{\it Acknowledgments.}
The paper is supported by ERC AdG grant QOLAPS and by Foundation for Polish Science TEAM project co-financed by the EU European Regional Development Fund. FB acknowledges support from EPSRC. Part of this work was done in National Quantum Information Center of Gda\'{n}sk.

\bibliographystyle{apsrev}


\end{document}